\newcommand{\specialcell}[1]{\ifmeasuring@#1\else\omit$\displaystyle#1$\ignorespaces\fi}
\def\transpose{{\hbox{\tiny\it T}}}
\renewcommand{\Re}{{\mathbb R}}
\newcommand{\argmin}{\mathop{\rm arg\, min}}
\def\be{\begin{eqnarray}}
\def\ee{\end{eqnarray}}
\def\ben{\begin{eqnarray*}}
\def\een{\end{eqnarray*}}
\def\barc{\oo c}
\def\ddt{{\mathchoice{\FRAC{1}{d}{dt}}%
{\FRAC{1}{d}{dt}}%
{\FRAC{3}{d}{dt}}%
{\FRAC{3}{d}{dt}}}}
\def\ddxp{{\mathchoice{\FRAC{1}{d^+}{dx}}%
{\FRAC{1}{d^+}{dx}}%
{\FRAC{3}{d^+}{dx}}%
{\FRAC{3}{d^+}{dx}}}}
\def\Ebox#1#2{%
	\begin{center}
		\includegraphics[width= #1\hsize]{#2} 
	\end{center}}
\def\sq{$\Box$}
\def\qed{\ifmmode\Box\else{\unskip\nobreak\hfil
\penalty50\hskip1em\null\nobreak\hfil\sq
\parfillskip=0pt\finalhyphendemerits=0\endgraf}\fi\par\medbreak}
\newcommand{\bsq}{\rule{1.25ex}{1.25ex}}
\def\bqed{\ifmmode\bsq\else{\unskip\nobreak\hfil
\penalty50\hskip1em\null\nobreak\hfil\bsq
\parfillskip=0pt\finalhyphendemerits=0\endgraf}\fi\medskip}
\newsavebox{\junk}
\savebox{\junk}[1.6mm]{\hbox{$|\!|\!|$}}
\def\state{{\sf X}}
\newcommand{\field}[1]{\mathbb{#1}}
\def\ZZ{\field{Z}}
\def\nat{\field{Z}_+}
\newcommand{\one}{\hbox{\rm\large\textbf{1}}}
\def\bfmN{{\mbox{\protect\boldmath$N$}}} 
	\def\bfmU{{\mbox{\protect\boldmath$U$}}}
\def\bfvarphi{\mbox{\boldmath$\varphi$}}
\def\til={{\widetilde =}}
\def\tiltheta{{\tilde \theta}}
\def\clB{{\cal B}}
\def\clD{{\cal D}}
\def\clE{{\cal E}}
\def\clN{{\cal N}}
\def\clS{{\cal S}}
\def\half{{\mathchoice{\textstyle \frac{1}{2}}%
{\frac{1}{2}}%
{\hbox{\tiny $\frac{1}{2}$}}%
{\hbox{\tiny $\frac{1}{2}$}} }}
\def\eqdef{\mathbin{:=}}
\def\Prob{{\sf P}}
\def\Expect{{\sf E}}
\def\epsy{\varepsilon}
\def\varble{\,\cdot\,}
\newtheorem{theorem}{Theorem}[section]
\newtheorem{proposition}[theorem]{Proposition}
\newtheorem{lemma}[theorem]{Lemma}
\def\Lemma#1{Lemma~\ref{#1}}
\def\Prop#1{Prop.~\ref{#1}}
\def\Section#1{Section~\ref{#1}}
\def\Figure#1{Figure~\ref{#1}}
\newcommand{\oo}{\overline}
\def\barh{{\oo {h}}}
\def\Lv{L_\infty^v}
\def\bfmX{{\mbox{\protect\boldmath$X$}}}
\def\FRAC#1#2#3{\genfrac{}{}{}{#1}{#2}{#3}}
\def\half{{\mathchoice{\FRAC{1}{1}{2}}%
{\FRAC{1}{1}{2}}%
{\FRAC{3}{1}{2}}%
{\FRAC{3}{1}{2}}}}
\def\tilc{\tilde c}
\def\tilpsi{\tilde \psi}
\def\grad{\nabla}
\def\gradpsi{\nabla \psi}
\def\tilgrad{\widetilde \nabla}
\def\bfmath#1{{\mathchoice{\mbox{\boldmath$#1$}}%
{\mbox{\boldmath$#1$}}%
{\mbox{\boldmath$\scriptstyle#1$}}%
{\mbox{\boldmath$\scriptscriptstyle#1$}}}}
\def\bfmB{\bfmath{B}}
\def\bfmN{{\mbox{\protect\boldmath$N$}}}
\def\transpose{{\hbox{\tiny\it T}}}
\newcounter{rmnum}
\newenvironment{arabnum}{\begin{list}{{\upshape \arabic{rmnum}.\ }}{\usecounter{rmnum}
\setlength{\leftmargin}{8pt}
\setlength{\rightmargin}{10pt}
\setlength{\itemindent}{-1pt}
}}{\end{list}}
\newcounter{anum}
\newenvironment{alphanum}{\begin{list}{{\upshape (\alph{anum})}}{\usecounter{anum}
\setlength{\leftmargin}{8pt}
\setlength{\rightmargin}{10pt}
\setlength{\itemindent}{-1pt}
}}{\end{list}}
\newenvironment{Anum}[1]{\smallbreak
\noindent
{\bf{Assumption~{#1}}:}
\begin{list}{{\upshape \textbf{#1.\arabic{anum}}:}}{\usecounter{anum}
\setlength{\leftmargin}{12pt}
\setlength{\rightmargin}{8pt}
\setlength{\itemindent}{15pt}
}}{\qed\end{list}}
\newlength{\noteWidth}
\long\def\notes#1{\ifinner
{\tiny #1}
\else
\marginpar{\parbox[t]{\noteWidth}{\raggedright\tiny #1}}
\fi}
\def\notes#1{}
\def\spm#1{\notes{spm:  #1}}
\def\generate{{\cal D}}
\def\DeltaA{\Delta}  
\def\gh{\grad h}
\def\fee{\text{f}}
\def\dffA{{\cal A}}
\def\Sens{{\cal S}}
\begin{document}

\title{Differential TD Learning 
for Value Function Approximation
}

\author{Adithya M.\ Devraj and Sean P.\ Meyn
\thanks{A.D. and S.M. are with the Department of Electrical and Computer
Engg.\ at the University of Florida, Gainesville.
Research is supported by the NSF grants CPS-0931416 and CPS-1259040.
Extended version of paper
submitted to IEEE Conference on Decision \&\ Control, March, 2016.}
}

\maketitle
\thispagestyle{empty}
\setcounter{page}{0}
 
\begin{abstract} 
Value functions arise as a component of algorithms as well as performance metrics in statistics and engineering applications. Computation of the associated Bellman equations is numerically challenging in all but a few special cases.   


A popular approximation technique is known as Temporal Difference (TD) learning. 
The  algorithm introduced in this paper is intended to resolve two well-known problems with this approach:  In the discounted-cost setting, the variance of the algorithm diverges as the discount factor approaches unity.   Second, for the average cost setting, unbiased algorithms exist only in special cases. 

It is shown that the gradient of any of these value functions admits a representation that lends itself to algorithm design.   Based on this result, the new \textit{differential TD} method is obtained for Markovian models on Euclidean space with smooth dynamics. 


Numerical examples show remarkable improvements in performance.  In application to speed scaling, variance is reduced by two orders of magnitude.   

%

\medskip

{\small
	\noindent
	\textbf{Keywords:}  
	Reinforcement learning,  
	Approximate dynamic programming,
	Poisson's equation,
	stochastic optimal control}
	\smallskip

{\small
	\noindent
	\textbf{2000 AMS Subject Classification:}
	93E20,	
	93E35,	
	60J20  	

}

\end{abstract}

 
\section{Introduction}
\label{sec:Intro}

The value functions considered in this paper are based on a Markov chain  $\bfmX = \{X(t): t=0,1,2,\ldots\},$ taking values in $\Re^d$, and an associated cost function $c:\Re^d \to \Re$.  A critical modeling assumption is the evolution equation, 
\begin{equation}
X(t+1) = a(X(t),N(t+1)), \quad t \in \nat,
\label{e:SP_disc}
\end{equation}
in which $\bfmN = \{N(t): t=0,1,2,\ldots\}$ is an $m$-dimensional i.i.d.\ disturbance sequence, and $a: \Re^{d\times m} \to \Re^d$ is continuous.    Under these assumptions,  $X(t+1)$ is a continuous function of its initial condition $X(0)$;  this observation is the starting point for the construction of algorithms for value function approximation.

\spm{Notation:   we have settled on $d$ for the dimension of $\state$.  Be warned: I have swapped $\ell$ for $l$ for the dimension of $\theta$}
We begin with some familiar background.

\subsection{Value functions in control and statistics}

A common performance metric in stochastic control and finance is the total discounted cost:
\begin{equation}
h_\alpha(x)=  \sum_{t=0}^{\infty} \alpha^t \Expect [c(X(t)) \mid X(0) = x]\, ,
\label{e:DCOE_disc}
\end{equation}
where $\alpha \in (0,1)$ is the discount factor.
The average cost is defined as the ergodic limit,
\begin{equation}
\barc = \lim_{n\rightarrow\infty} \frac{1}{n} \sum_{t=0}^{n-1} \Expect [c(X(t)) \mid X(0) = x]\, ,
\label{e:barc}
\end{equation}
which is of interest in many areas  beyond control engineering.   The following \emph{relative value function} is central to analysis of the average cost:
\begin{equation}
h(x)=  \sum_{t=0}^{\infty} \Expect [\tilc(X(t)) \mid X(0) = x],
\label{e:fish-sum}
\end{equation}
where $\tilc = c - \barc$. In particular,  under general conditions, the asymptotic variance (the variance appearing in the Central Limit Theorem for the ergodic average \eqref{e:barc}) can be expressed in terms of the relative value function \cite{CTCN}.

Under the assumptions imposed in this paper, the average cost  is deterministic and independent of $X(0)=x$.   Moreover, the relative value function solves \emph{Poisson's equation}: 
\begin{equation} 
 \Expect[h(X(t+1)) - h(X(t)) \mid X(t) = x ] = - \tilc(x)\, .
\label{e:ACOE_disc2}
\end{equation} 

Significant applications include,

\textit{Optimal control:}   
The policy iteration algorithm is used to compute an optimal policy based on two steps.   Given a policy, it is first necessary to obtain the associated value function.  The second step is to update the policy based on this value function \cite{bershr96a}.   This approach is used for both discounted and average-cost optimal control problems.

 Poisson's equation finds application in many other fields:

 \spm{
\cite{bertsi96a}:  TD/Q-learning,   and \cite{bershr96a}:  basic MDP theory
\\
 find a better place for this statement: it so happens in policy iteration that we are only interested in the \emph{{gradients}} of the value functions, rather than the value functions themselves. 
}

\spm{didn't like ranking of states -- I don't think this would be accepted}

\spm{This is getting into detail for something we know little about, and has little to do with the paper.   Can we write a more concise statement about performance metrics in finance?
Estimating the value function also has applications in time-series prediction. For example, the problem of estimating the current value of a business model as a discounted sum of its future cash flows, based on it's current policies is of primary interest in finance.}

\textit{Variance reduction:}  
 The control variate method is intended to reduce variance for various Monte-Carlo methods;
a version of this technique involves the construction of an approximate solution to Poisson's equation \cite{asmgly07,CTCN}.

\spm{cut out for CDC hengly01b, laumehmeyrag15}

\textit{Nonlinear filtering:}  A recent approach to approximate nonlinear filtering requires the solution to Poisson's equation to obtain the innovation gain \cite{yanmehmey13}.  Approximations are required for efficient implementation of this method.


We next recall the basic ideas surrounding TD-learning algorithms for value function approximation. 
The discussion is restricted to discounted-cost value functions.

\subsection{TD-learning and value function approximation}

Closed-form expressions for any of the value functions \eqref{e:DCOE_disc} or \eqref{e:ACOE_disc2} is impossible in all but a few special cases, such as linear systems with quadratic cost. One approach to approximation is a simulation based algorithm known as Temporal Difference (TD) learning  \cite{sutbar98,bertsi96a}.

The goal of TD-learning is to approximate the value function $h_\alpha$ using a parameterized family of functions $\{h_\alpha^\theta: \theta\in\Re^\ell\} $.   Throughout most of the paper we restrict to a linear parameterization of the form 
\begin{equation}
 h_\alpha^\theta = \sum_{j=1}^\ell \theta_j   \psi_j,
\label{e:hLinearPar}
\end{equation}
where $\psi\colon\Re^d\to\Re^\ell$ is continuously differentiable.
 The optimal parameter $\theta^*$ is the solution to a minimum-norm problem,
\begin{equation}
\begin{aligned}
\theta^* &= \argmin_\theta \| h_\alpha^\theta - h_\alpha\|^2_\pi 
\\
& =  \argmin_\theta \Expect[ (h_\alpha^\theta(X) - h_\alpha(X))^2 ],  
\end{aligned}
\label{e:TDgoal}
\end{equation}
where the expectation is with respect to $X\sim \pi$, the steady-state distribution;
see \Section{s:setup} for details.

 \spm{This tutorial will be useful for TD learning \cite{huachemehmeysur11} -- please review!
}

Theory for TD learning in the discounted cost setting is largely complete, in the sense that criteria for convergence are well-understood, and the asymptotic variance of the algorithm is computable based on standard theory from stochastic approximation theory \cite{bor08a}.   Theory and algorithms for the average cost setting is more fragmented.  The analog of \eqref{e:TDgoal} with $h_\alpha$ replaced by the relative value function can be solved using TD-learning techniques only for Markovian models that regenerate:  there exists a single state $x^\bullet$ that is visited infinitely often \cite{CTCN,huachemehmeysur11}.  

Regeneration is not a restrictive assumption in many cases.   However,  the asymptotic variance of the algorithms introduced in 
\cite{CTCN} grows with the variance of inter-regeneration  times.  The variance can be massive in simple examples such as the M/M/1 queue. High variance is also predominantly observed in the discounted cost case when the discounting factor is close to $1$.

The \textit{differential} TD-learning algorithms introduced in this paper are designed to resolve these issues.  The main idea is to estimate the  \emph{gradient} of the value function directly.  Under the conditions imposed, the variance remains uniformly bounded  over $0 < \alpha < 1$, and is also applicable for approximating the solution to Poisson's equation.

\subsection{Differential TD-learning}

In $\grad$-TD learning algorithms, the gradient of the value function is approximated rather than the function itself.    
\spm{dropped motivation here -- I think it is clear elsewhere}

Consider again the discounted-cost setting,  and suppose that both $h_\alpha$ and each of the potential approximations   $\{h_\alpha^\theta: \theta\in\Re^\ell\} $ are continuously differentiable ($C^1$) as a function of $x$.  In most of the paper, algorithms and analysis are restricted to the linear parameterization \eqref{e:hLinearPar}, so that  
\begin{equation}
 \grad h_\alpha^\theta = \sum_{j=1}^\ell \theta_j \grad \psi_j.
\label{e:gradhtheta}
\end{equation} 
The $\grad$-TD learning algorithm is designed to compute the solution to the following nonlinear program:
\begin{equation}
\theta^* =   \argmin_\theta \Expect[ \| \grad h_\alpha^\theta(X) - \grad h_\alpha(X)\|^2 ]\, ,\quad X\sim \pi\,.
\label{e:gradTD}
\end{equation}
Once the optimal parameter has been obtained,  the approximate value function requires the addition of a constant, 
\begin{equation}
 h_\alpha^{\theta^*} = \sum_{j=1}^\ell \theta_j   \psi_j + \kappa (\theta^*).
\label{e:gradTDaddConstant}
\end{equation}
The optimal choice of $\kappa(\theta^*)$ is also obtained in the algorithms described in this paper.
\notes{Odd? Can we write this as equal to something??
\\
spm:  I removed the "grad" before the psi.  I'm not sure what is odd here}

 In summary, the contributions of this work are,
 \begin{arabnum}
 	\item  
	\begin{alphanum} 
 \item
 The new
 $\grad$-TD algorithms
are applicable for either  discounted- and average-cost.
 \item
  For a linear parameterization,  the $\grad$-LSTD algorithm solves the quadratic program \eqref{e:gradTD}. 	
\item
 Extensions to nonlinear parameterizations are obtained.   
\end{alphanum}	
In the discounted-cost setting, the algorithms also compute the optimal constant  $\kappa(\theta^*)$ appearing in \eqref{e:gradTDaddConstant}.
	
	\item
	The new algorithms are applicable for models that do not have regeneration, and under general conditions the variance is uniformly bounded over all $0<\alpha<1$.
\end{arabnum} \notes{{Not talking about Poisson's equation in the list of contributions?}}

\spm{I don't think this  needs to be repeated here:
Also, as previously mentioned, in many applications, including policy iteration and optimal filtering, it is the gradient of the value function that is of interest rather than the value function itself. 
 }

\notes{they ARE only applicable
\\
spm: ?}

These algorithms do have limitations.  First, they are only applicable in settings where the gradient is a meaningful concept.  However, in numerical experiments we find that a pseudo-gradient can be defined even for a model with discrete state space, and the resulting ad-hoc algorithm is remarkably effective.   

Also,  in its current formulation,   the algorithms introduced here fall into the category of \emph{Approximate Dynamic Programming} (ADP), rather than \emph{Reinforcement Learning} (RL):  The algorithm relies on simulating a model of the system, rather than estimating a value function based on observations of a physical system.   This distinction is not absolute:  For example, in the applications to speed-scaling presented in \Section{sec:sim}, the $\grad$-LSTD learning algorithm does fall into the class of RL algorithms. 
 
 The remainder of the paper is organized as follows: basic definitions and value function representations are presented in \Section{s:Rep}.  The $\grad$-LSTD learning algorithm is introduced in  \Section{s:TD}.  Results from numerical experiments are surveyed in \Section{sec:sim}, and conclusions are contained in \Section{sec:conclusions}. 
 
 \spm{one gap in the intro -- we need to say something about Lyapunov exponents, no?}
 

\section{Representations and approximations}
\label{s:Rep}

We begin with assumptions,  and representations for value functions and their gradients.    

\subsection{Markovian model and problem formulation}
\label{s:setup}

The evolution equations  \eqref{e:SP_disc} define a Markov chain  $\bfmX$ with transition semigroup defined
for $t\ge 0$, $x\in\Re^d$, and  $A\in \clB(\Re^d)$, via
\[
P^t(x,A):=\Prob_x\{X(t)\in A\}:=\Pr\{X(t)\in A\,|\,X(0)=x\}.
\]
For $t=1$ we write $P=P^1$,  which has the following form: 
\[
P(x,A) = \Pr\{ a(x,N(1)) \in A\}.
\]
\notes{Was x + a(x,N)}

The first set of assumptions ensures that each of the value functions exists.  Fix a continuous function  $v\colon\Re^d\to [1,\infty)$ that serves as a weighting function. For any measurable function $f\colon\Re^d\to\Re$,  the $v$-norm is denoted by,
\[
\|f\|_v \eqdef \sup_x  \frac{|f(x)|}{v(x)} .
\]
The set of all measurable functions for which $\|f\|_v$ is finite is denoted $\Lv$.

\begin{Anum}{A1}
\item  
The Markov chain is \textit{$v$-uniformly ergodic}:  There exists a unique invariant probability measure $\pi$, $b_0 < \infty$,  and $0<\rho_0 < 1$,  such that for each function $f\in\Lv$,
\begin{equation}
\big| \Expect_x[f(X(t))] - \pi(f) \big| \leq b_0 \rho_0^t  \|f\|_v v(x) ,\quad t\ge 0\,,
\label{e:v-uni}
\end{equation}
where $\pi(f)$ denotes the steady-state mean of $f$.  
\end{Anum}
It is well known that (A1) is equivalent to the existence of a Lyapunov function: drift condition (V4) of \cite{CTCN}.
The following consequences are immediate:
\begin{proposition}
\label{t:v-uni-value} \notes{The following HOLDS?}
The following hold under (A1),  and the bound $\|c\|_v<\infty$: 
The limit  $\barc$  in \eqref{e:barc} exists, with $\barc = \pi(c) < \infty$, and is independent of the initial condition $x$. Moreover, there exists $b_c<\infty$ such that:
\[
\begin{aligned}
|h_\alpha(x)| & \leq b_c \big( v(x) + (1-\alpha)^{-1} \big)
\\
|h_\alpha(x) - h_\alpha(y) | & \leq b_c \big( v(x) +  v(y)\big)
\\
\text{\it
and} \qquad 
|h(x)|& \leq b_c v(x)\,, \qquad\qquad  x,y\in\Re^d
\end{aligned}
\] 
where $h_\alpha$ is defined in \eqref{e:DCOE_disc},  and $h$ is defined in \eqref{e:fish-sum}.
\bqed
\end{proposition}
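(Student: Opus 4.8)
The plan is to reduce all four assertions to the single geometric-ergodicity estimate \eqref{e:v-uni}, applied with $f=c$ (legitimate since $\|c\|_v<\infty$). I would first record the decomposition
\[
\Expect_x[c(X(t))] = \pi(c) + e_t(x), \qquad |e_t(x)| \le b_0\rho_0^t\|c\|_v\, v(x),
\]
the bound being exactly \eqref{e:v-uni}. I would also note at the outset that $\pi(v)<\infty$: since (A1) asserts $\pi(f)$ is a finite number for every $f\in\Lv$, taking $f=v$ (for which $\|v\|_v=1$) forces $\pi(v)<\infty$, whence $|\pi(c)|\le\|c\|_v\pi(v)<\infty$. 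After that, each claim follows by summing a geometric series and choosing $b_c$ to dominate the finitely many constants produced.

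For the existence of $\barc$, I would average the decomposition over $t=0,\dots,n-1$, obtaining $\frac1n\sum_{t=0}^{n-1}\Expect_x[c(X(t))] = \pi(c) + \frac1n\sum_{t=0}^{n-1}e_t(x)$, and bound the error by $\frac1n\,\frac{b_0\|c\|_v}{1-\rho_0}v(x)\to 0$; this identifies the limit \eqref{e:barc} as $\barc=\pi(c)$, finite and independent of $x$. For $h_\alpha$ I would substitute the decomposition into \eqref{e:DCOE_disc} and separate the two sums,
\[
h_\alpha(x) = \frac{\pi(c)}{1-\alpha} + \sum_{t=0}^\infty \alpha^t e_t(x),
\]
the second series converging absolutely with modulus at most $\frac{b_0\|c\|_v}{1-\alpha\rho_0}v(x)\le\frac{b_0\|c\|_v}{1-\rho_0}v(x)$; this gives the first inequality. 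The main idea for the Lipschitz-type second inequality is that the divergent constant $\pi(c)/(1-\alpha)$ does not depend on the initial condition and therefore cancels in the difference $h_\alpha(x)-h_\alpha(y)=\sum_t\alpha^t\big(e_t(x)-e_t(y)\big)$, which is then bounded by $\frac{b_0\|c\|_v}{1-\rho_0}\big(v(x)+v(y)\big)$.

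The bound on $h$ is the cleanest of the three: because $\tilc=c-\barc=c-\pi(c)$, the summand in \eqref{e:fish-sum} is exactly $\Expect_x[\tilc(X(t))]=e_t(x)$, so $h(x)=\sum_{t=0}^\infty e_t(x)$ converges absolutely with $|h(x)|\le\frac{b_0\|c\|_v}{1-\rho_0}v(x)$. Setting $b_c=\max\{\|c\|_v\pi(v),\,b_0\|c\|_v/(1-\rho_0)\}$ then makes all three displayed inequalities hold at once. I do not expect a genuine obstacle here: the only points requiring care are the interchange of expectation and summation (justified by the absolute geometric bound via Tonelli) and the extraction of $\pi(v)<\infty$ from (A1); the entire argument is powered by the exponential decay in \eqref{e:v-uni}, with the one conceptual point being the cancellation of the $\alpha$-divergent constant in the two difference estimates.
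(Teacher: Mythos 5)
Your argument is correct and is exactly the standard derivation the paper has in mind: the paper states these bounds without proof, declaring them ``immediate'' consequences of (A1), and your reduction of all four claims to the single geometric estimate \eqref{e:v-uni} applied to $f=c$ (with the cancellation of the $x$-independent term $\pi(c)/(1-\alpha)$ in the difference bound, and the observation that $\pi(v)<\infty$ so $\pi(c)$ is finite) is precisely that intended one-line proof. No gaps.
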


The following operator-theoretic notation will simplify exposition.   
For any measurable function $f\colon\Re^d\to\Re$   the new function $P^t f$ is defined as the conditional expectation
\[
P^t f\, (x) = 
\Expect_x [f(X(t))  ]  \eqdef
\Expect [f(X(t)) \mid X(0) = x]  .
\]
The  \textit{resolvent kernel} is   
the ``$z$-transform'' of the semi-group,
\begin{equation}
R_\alpha \eqdef \sum_{t = 0}^{\infty} \alpha^t P^t,
\quad0<\alpha < 1.
\label{e:resolvent_disc}
\end{equation} 
Under the assumptions of \Prop{t:v-uni-value} we have  
\begin{equation}
\begin{aligned}
h_\alpha &= R_\alpha c\, .
\label{e:ResFormulah}
\end{aligned}
\end{equation} 
The solution to Poisson's equation has a similar representation under these assumptions -- an operator-theoretic representation of \eqref{e:fish-sum}
 \cite{MT}.



 \notes{Not talking about the uniqueness?  A: no need.
 \\
 Let's think about this:
 {With slight abuse of notation, we denote $h \equiv h_\alpha, \,\alpha =1$ in the rest of the paper.}
 \\
 We said we would start with discounting, let's just stick to that.}

The representation \eqref{e:ResFormulah} is valuable in deriving the  TD-learning algorithm \cite{bertsi96a,CTCN}.   We seek a similar representation for the gradient $\grad h_\alpha =  \grad [ R_\alpha c]$.      

\subsection{Representation for the gradient of a value function} 
\label{s:rep}

The goal here is to obtain an operator $\Omega_\alpha$ for which the following holds:
\begin{equation}
\grad h_\alpha = \Omega_\alpha \grad c.
\label{e:OmDream}
\end{equation}
This requires assumptions on the model and the cost function.   In the following, a heuristic construction of $\Omega_\alpha$ is presented, with justifications collected together in \Section{s:derLSTD}.   
For complete details, the reader is referred to \cite{devkonmey16a}.

The representation requires additional assumptions:     
 
\begin{Anum}{A2}
\item The disturbance process $\bfmN$ does not depend upon the initial condition $X(0)$.
\label{a:N}
\item The function $a$ is continuously differentiable in its first variable, with
\[
\sup_{x,n} \| \grad a (x,n)\| <\infty
\]
in which $\| \varble \|$ is any matrix norm, and the $i$th column of the $d\times d$ matrix  $\grad a$ is equal to $\grad a_i$. 
\spm{can give equation in journal}
\end{Anum}
The first assumption	A2.\ref{a:N} is critical so that the initial state $X(0)=x$ can be regarded as a   variable, with $X(t)$ being a continuous function of $x$.   Assumption~A2 allows us to define the \textit{sensitivity process} $\Sens(t)$:
\begin{equation}
\Sens_{i,j} (t) \eqdef \frac{\partial X_i(t)}{\partial X_j(0)}, \quad 1\leq i,j \leq d.
\label{e:SensDef}
\end{equation}
From \eqref{e:SP_disc}, the sensitivity process evolves according to the random linear system
\spm{the transpose seems odd.  Why did we introduce this?}
\begin{equation}
\begin{aligned}
\clS(t+1) &= \dffA^\transpose(t+1) \clS(t) ,\quad \clS(0)=I\,,
\end{aligned}
\label{e:SP_sens_disc}
\end{equation}
where $\dffA(t) \eqdef \grad a\, (X(t-1),N(t))$.

The operator $\tilgrad$ is defined for  any $C^1$ function $f  $ via 
\begin{equation}
\tilgrad f (X(t)) \eqdef \Sens^\transpose(t) \grad f(X(t))\, .
\label{e:tilgrad}
\end{equation}
It follows from the chain rule that this coincides with the gradient of $f(X(t))$ with respect to the initial conditions.

\spm{no room:
\[
\tilgrad f (X(t)) = \Bigl[\frac{\partial  f(X(t))}{\partial{X_1(0)}}, \ldots, \frac{\partial f(X(t))}{\partial{X_d(0)}} \Bigr]^\transpose.
\]
}

The interpretation of \eqref{e:tilgrad} motivates the introduction of a semi-group $\{Q^t\}$ of operators, whose domain includes functions $g: \Re^d \to \Re^d$ for which $g_i\in\Lv$ for each $i$.    For $t=0$,  it is the identity operator,
and for $t\ge 1$, 
\begin{equation} 
Q^t g(x) \eqdef \Expect_x \bigl[ \clS^\transpose(t) g(X(t))  \bigr].
\label{e:Qt} 
\end{equation}
Provided we can exchange the gradient and the expectation,  
\[ 
\frac{\partial}{\partial x_i} \Expect[f(X(t)) ]  = \Expect \bigl[ [\tilgrad f(X(t))]_i  \bigr]
\]
which implies that $  \grad P^t f(x)  = \Expect_x [\tilgrad f(X(t))]  = Q^t \grad f\,(x)$.
Under further assumptions we obtain
\[
\grad h_\alpha = 
\sum_{t=0}^{\infty} \alpha^t \grad P^t c 
=
\sum_{t=0}^{\infty} \alpha^t Q^t \grad c\, .
\]
We then obtain \eqref{e:OmDream}, with
\begin{equation} 
\Omega_\alpha g \eqdef \sum_{t=0}^{\infty} \alpha^t Q^t g.
\label{e:Omegarep_disc}   
\end{equation}  
The representation  \eqref{e:OmDream} is the basis of the  $\grad$-LSTD~learning algorithms developed in this paper.

Under the conditions of \Prop{t:dLSTDconverges} the operator $\Omega_\alpha$ is well defined on a large domain of functions, with uniform bound over all $0\le \alpha \le 1$.   In the special case of $\alpha =1$ we denote  $\Omega=\Omega_1$, which under these conditions provides the representation  $\grad h = \Omega \grad c$ for the gradient of the relative value function.

\spm{Tighten up once "theorem mathy" is clarified.
Also, $Z$ is not defined in CDC,
$ \grad Z \tilc =$
}


\spm{keep section titles short... for a discrete time dynamical system}

\section{Differential TD~learning}
\label{s:TD}

Algorithms are developed here for the Markov model \eqref{e:SP_disc}, subject to Assumptions~A1 and A2.
The algorithms are presented first,  with supporting theory postponed to \Section{s:derLSTD}. 

Until \Section{s:extend} we restrict to a  linear parameterization, in which $h_\alpha^\theta = \theta^\transpose\psi$.

\subsection{LSTD algorithms}

We begin with a review of
the standard algorithm, which is defined by the following recursion.

\noindent{\bf Least squares TD-learning algorithm}
\begin{equation}
\begin{aligned}
\varphi(t) & = \alpha \varphi(t-1) +  \psi(X(t))
  \\
  b(t) & = (1-\gamma_t)  b(t-1) + \gamma_t \varphi(t) c(X(t))
\\
  M(t) & = (1-\gamma_t)  M(t-1) + \gamma_t \psi(X(t)) \psi^\transpose(X(t)),
 \label{e:TD2}
\end{aligned}
\end{equation}
and obtain $\theta(t) = M^{-1}(t) b(t)$.
The algorithm is initialized with $b(0)$, $\varphi(0) \in\Re^\ell$,  and   $M(0)>0$   a positive-definite $\ell\times\ell$ matrix  \cite{bertsi96a,CTCN}.

Throughout the paper the gain sequence appearing in  \eqref{e:TD2} and elsewhere is taken to be $\gamma_t = 1/t$, $t\ge 1$. 
 
\spm{We use this term later, but awkward here:
$\varphi_\psi(t)$ is called the \emph{eligibility vector}}

To simplify discussion we restrict to a stationary setting in the convergence results in this paper:
\begin{proposition}
\label{t:LSTDconverges}
Suppose that (A1) holds, and that $c^2$ and $\|\psi\|^2$ are in $\Lv$.   Suppose moreover that the 
matrix $M=\Expect_\pi[\psi(X) \psi(X)^\transpose]$  is full rank, where $X\sim \pi$.  Then there is a version of the pair process $(\bfmX,\bfvarphi)$ that is stationary.   For any initial conditions $b(0)  \in\Re^\ell$  and   $M(0)>0$,  the algorithm is consistent:
\[
\theta^* = \lim\limits_{t\to\infty} M^{-1}(t) b(t).
\]
\end{proposition}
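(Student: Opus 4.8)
The plan is to exploit the special structure that the gain $\gamma_t = 1/t$ imposes on the recursions \eqref{e:TD2}. Multiplying the $M$- and $b$-updates by $t$ telescopes them into Ces\`aro averages,
\[
M(t) = \frac{1}{t}\sum_{s=1}^{t}\psi(X(s))\psi^\transpose(X(s)),
\qquad
b(t) = \frac{1}{t}\sum_{s=1}^{t}\varphi(s)\,c(X(s)),
\]
in which the initializations $M(0)$ and $b(0)$ disappear (the factor $1-\gamma_1 = 0$), so that only $\varphi(0)$ survives; hence ``for any initial conditions'' reduces to controlling the geometrically decaying influence of $\varphi(0)$. First I would construct the stationary version of $(\bfmX,\bfvarphi)$ by extending $\bfmX$ to a two-sided stationary chain with marginal $\pi$ --- available because (A1) furnishes a unique invariant measure --- and setting $\varphi(t) = \sum_{k\ge 0}\alpha^k\psi(X(t-k))$, the stationary solution of the eligibility recursion.

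Next I would establish the integrability needed for an ergodic theorem. Applying \eqref{e:v-uni} to $f = v$ (for which $\|v\|_v = 1$) shows $\pi(v) < \infty$, and since $c^2,\|\psi\|^2\in\Lv$ this gives $\pi(c^2)<\infty$ and $\pi(\|\psi\|^2)<\infty$. Each entry of $\psi\psi^\transpose$ is then in $L^1(\pi)$, and by Cauchy--Schwarz together with $\sum_k\alpha^k<\infty$,
\[
\Expect_\pi\bigl[\|\varphi(0)\|\,|c(X(0))|\bigr]
\le \sum_{k\ge 0}\alpha^k \Expect_\pi\bigl[\|\psi(X(-k))\|\,|c(X(0))|\bigr]
\le \frac{\sqrt{\pi(\|\psi\|^2)\,\pi(c^2)}}{1-\alpha} < \infty,
\]
so $\varphi(0)c(X(0))$ is integrable under the stationary law.

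With integrability in hand, I would invoke the strong law of large numbers for $v$-uniformly ergodic chains (the ergodic theorem for the stationary ergodic process generated by (A1); see \cite{MT,CTCN}) to conclude $M(t)\to M$ a.s.\ and, for the stationary eligibility vector, $\frac{1}{t}\sum_{s}\varphi(s)c(X(s))\to b^*\eqdef\Expect_\pi[\varphi(0)c(X(0))]$ a.s. The discrepancy between the algorithm's $\varphi$ and its stationary version is $\alpha^s\varphi(0)$; since $\sum_s\alpha^s|c(X(s))|<\infty$ a.s.\ (its expectation is bounded by $\pi(|c|)/(1-\alpha)$), the Ces\`aro average of this transient vanishes, so $b(t)\to b^*$ regardless of $\varphi(0)$. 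Because $M$ is full rank, $M(t)$ is eventually invertible and matrix inversion is continuous there, giving $\theta(t)=M^{-1}(t)b(t)\to M^{-1}b^*$.

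Finally I would identify $M^{-1}b^*$ with $\theta^*$. Using stationarity, $\Expect_\pi[\psi(X(-k))c(X(0))] = \Expect_\pi[\psi(X)P^k c(X)]$, so summing the geometric series and using $h_\alpha = R_\alpha c$ from \eqref{e:ResFormulah} yields $b^* = \Expect_\pi[\psi(X)h_\alpha(X)]$; the pair $(M,b^*)$ are then exactly the coefficients of the normal equations for the projection problem \eqref{e:TDgoal}, whence $M^{-1}b^* = \theta^*$. The main obstacle is the middle step: rigorously applying the ergodic theorem to $\varphi(s)c(X(s))$, which is a function not of $X(s)$ alone but of the entire past, while simultaneously showing the initialization transient washes out. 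Once the averaging is justified, the identification of the limit is routine algebra.
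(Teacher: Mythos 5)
Your proposal is correct and follows essentially the same route as the paper's own (much terser) proof: construct the two-sided stationary chain from (A1), define the stationary eligibility vector $\varphi(t)=\sum_{k\ge 0}\alpha^k\psi(X(t-k))$, identify $\theta^*=M^{-1}b$ with $b=\Expect_\pi[\varphi(0)c(X(0))]$, and conclude by the Law of Large Numbers for stationary processes. The additional details you supply --- the Cauchy--Schwarz integrability bound, the washing out of the $\alpha^s\varphi(0)$ transient, and the resolvent computation identifying $b$ with $\Expect_\pi[\psi(X)h_\alpha(X)]$ --- are all sound and simply make explicit what the paper leaves implicit.
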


\begin{proof}
The existence of a stationary solution $\bfmX$ follows directly from $v$-uniform ergodicity, and we then  define 
\[
 \varphi(t) =  \sum_{i=0}^\infty \alpha^i   \psi(X(t-i)).
\]
It is known that the optimal parameter can be expressed 
$\theta^* =  M^{-1} b$ in which   $b =\Expect_\pi[\varphi(t) c(X(t))]$, so the result follows from the Law of Large Numbers for stationary processes.
\end{proof}
\notes{{Expectation with respect to $\pi$ no??}}

In the construction of the LSTD algorithm, the optimization problem \eqref{e:TDgoal} is cast in the Hilbert space,
\[
 L_2^\pi = \bigl\{ \text{measurable } h\colon\Re^d\to\Re \ : \ \|h\|_\pi^2 = \langle h,h \rangle_\pi <\infty\bigr\}
 \]
with
 $
\langle f,g \rangle_\pi \eqdef  \int  f(x)g(x) \pi(dx)$.  
The $\grad$-LSTD algorithm is based on a different inner product to define the norm in the approximation error.

For $C^1$ functions $f$, $g$ for which $\| \grad f\|, \|\grad g\| \in L_2^\pi$,
 define the inner product
\[
\langle f,g \rangle_{\pi,1} = \int  \grad f (x)^\transpose \grad g (x) \pi(dx), 
\]
with the associated norm   $\| f\|_{\pi,1} = \sqrt{\langle f,f \rangle_{\pi,1} }$.   The nonlinear program \eqref{e:gradTD} can be recast  as 
\begin{equation}
\theta^* = \argmin_{\theta} \|h_\alpha^\theta - h_\alpha \|_{\pi,1}.
\label{e:OptThetaGhalphaDT}
\end{equation}


Consider the linear parameterization \eqref{e:gradhtheta} in which $\psi\colon\Re^d\to\Re^d$ is continuously differentiable, and assume as well that $c$ is continuously differentiable. The $\grad$-LSTD~learning algorithm is then defined by the following recursion: 

\noindent{\bf Differential least squares TD-learning algorithm}
\begin{equation}
\begin{aligned}
 \varphi(t) & = \alpha \dffA^\transpose(t) \varphi(t-1) +  \grad \psi(X(t))
\\
b(t) & = (1-\gamma_t) b(t-1) + \gamma_t \varphi(t)^\transpose \grad c(X(t))
	\\
	\!\!\!
M(t) & = (1-\gamma_t) M(t-1) + \gamma_t \grad \psi(X(t)) \grad \psi(X(t))^\transpose
\label{e:dTD1}
\end{aligned}
\end{equation}
where $\grad \psi\, (x)$ denotes the $d \times \ell$ matrix
\begin{equation}
[ \grad \psi \, (x)]_{i,j} = \frac{\partial }{\partial x_i} \psi_j (x)\,, \quad x\in\Re^d,
\label{e:gradpsi}
\end{equation}
and the parameters are obtained as, 
\spm{no transpose in diffA, right?}
$
\theta(t)=  M^{-1}(t) b(t)$.  
Once again, $M(0)>0$ is an arbitrary $\ell\times\ell$ positive-definite matrix, $b(0)\in\Re^\ell$, $\varphi(0) \in\Re^{d \times \ell}$ are arbitrary initializations.	 

Two more steps are required to obtain an estimate of $h_\alpha$.   To ensure that $\pi(h_\alpha) = \pi( h_\alpha^{\theta})$ we   take $
h_\alpha^{\theta} = \theta^\transpose \psi + \kappa (\theta)$ where the constant is
\[
\kappa (\theta) = -\pi(h_\alpha^{\theta}) + \pi(c) / (1-\alpha).
\]
The two means can be estimated recursively: 
\notes{Important: Remove eqnarray here, and everywhere else, or only for the algorithms?
spm:  avoid eqnarray when possible, but we can worry about this in the summer for the final draft.  Also, don't use equation numbers unless you think you will reference the equation.}
\begin{eqnarray} 
		\barh_\alpha(t) & = & (1-\gamma_t) \barh_\alpha(t-1) + \gamma_t  h_\alpha^{\theta(t)}
 	\label{e:dTD4}
	\\	
\displaystyle
		\barc(t) & = & (1-\gamma_t) \barc(t-1) + \gamma_t c(X(t)).
 	\label{e:dTD5}
\end{eqnarray} 
It is immediate that $\barc(t) \to \barc$ as $t\to\infty$ by the Law of Large Numbers for $v$-uniformly ergodic Markov chains \cite{MT}.   Convergence of   $\barh_\alpha(t)$ to $\pi(h_\alpha^{\theta^*}) $ requires further assumptions.
 
This completes the description of the   $\grad$-LSTD~learning algorithm.

\subsection{Derivation and analysis}  
\label{s:derLSTD}
 
For a linear parameterization, the optimal parameter is the minimum of a quadratic.
The proof of \Prop{t:LSTDquad} follows immediately  from the definition of the norm $\|\cdot\|_{\pi,1}$.
\begin{proposition}
\label{t:LSTDquad}
The norm appearing in \eqref{e:OptThetaGhalphaDT} is a quadratic form,    
\begin{equation}
\|h_\alpha^\theta - h_\alpha\|^2_{\pi,1} = \theta^\transpose M \theta - 2b^\transpose\theta + k ,
\label{e:QuadtraticRep}
\end{equation}
in which for each $1\le i, j\le j$,
\begin{equation}
M_{i,j} = \langle \psi_i, \psi_j \rangle_{\pi,1}, \quad b_i = \langle \psi_i,  h_\alpha \rangle_{\pi,1},
\label{e:bMInitDef}
\end{equation}
and $k = \langle h_\alpha,h_\alpha \rangle_{\pi,1}$.  Consequently, the optimizer \eqref{e:OptThetaGhalphaDT}
is any solution to 
\begin{equation}
M \theta^* = b.
\label{e:OptimalTheta}
\end{equation}
\bqed
\end{proposition}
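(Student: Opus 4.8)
The plan is to exploit the bilinearity and symmetry of the semi-inner product $\langle\cdot,\cdot\rangle_{\pi,1}$ together with the fact that the parameterization $h_\alpha^\theta = \theta^\transpose\psi = \sum_{j=1}^\ell \theta_j\psi_j$ is \emph{linear} in $\theta$. First I would record that the inner product is well defined on the functions in question: under the standing differentiability and integrability assumptions, $\|\grad\psi_j\|\in L_2^\pi$ for each $j$ and $\|\grad h_\alpha\|\in L_2^\pi$ (the latter via the representation $\grad h_\alpha = \Omega_\alpha\grad c$), so that each entry $M_{i,j}$, each $b_i$, and the constant $k$ defined in \eqref{e:bMInitDef} is finite.

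Next I would simply expand the squared norm. Writing $\|h_\alpha^\theta - h_\alpha\|^2_{\pi,1} = \langle h_\alpha^\theta - h_\alpha,\, h_\alpha^\theta - h_\alpha\rangle_{\pi,1}$ and using bilinearity and symmetry gives
\[
\|h_\alpha^\theta - h_\alpha\|^2_{\pi,1}
= \langle h_\alpha^\theta, h_\alpha^\theta\rangle_{\pi,1}
- 2\langle h_\alpha^\theta, h_\alpha\rangle_{\pi,1}
+ \langle h_\alpha, h_\alpha\rangle_{\pi,1}.
\]
Substituting $h_\alpha^\theta = \sum_j \theta_j\psi_j$ and pulling the scalars out of each slot, the first term becomes $\sum_{i,j}\theta_i\theta_j\langle\psi_i,\psi_j\rangle_{\pi,1} = \theta^\transpose M\theta$, the cross term becomes $\sum_i\theta_i\langle\psi_i,h_\alpha\rangle_{\pi,1} = b^\transpose\theta$, and the last term is exactly $k$. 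This yields the quadratic form \eqref{e:QuadtraticRep} with $M$, $b$, $k$ as claimed.

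Finally, for the characterization of the minimizer I would note that $M$ is symmetric and positive semidefinite, being the Gram matrix of $\psi_1,\dots,\psi_\ell$ under $\langle\cdot,\cdot\rangle_{\pi,1}$; hence $\theta\mapsto\theta^\transpose M\theta - 2b^\transpose\theta + k$ is convex, its gradient is $2(M\theta - b)$, and any stationary point is a global minimizer, so $\theta^*$ solves the normal equation \eqref{e:OptimalTheta}. There is no genuine obstacle here — the entire content is the algebraic expansion — but the one point worth stating carefully is that $\langle\cdot,\cdot\rangle_{\pi,1}$ is only a \emph{semi}-inner product, since gradients annihilate constants; consequently $M$ may be singular, and the first-order condition characterizes the \emph{set} of minimizers rather than a unique one. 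This is precisely why the statement asserts only that $\theta^*$ is \emph{any} solution of $M\theta^* = b$; full rank of $M$, when it holds, upgrades this to uniqueness.
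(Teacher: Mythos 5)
Your proof is correct and matches the paper's approach: the paper simply states that the result ``follows immediately from the definition of the norm $\|\cdot\|_{\pi,1}$,'' which is exactly the bilinear expansion you carry out. Your added observation that $\langle\cdot,\cdot\rangle_{\pi,1}$ is only a semi-inner product, so that $M$ may be singular and the normal equation characterizes the set of minimizers, is a worthwhile clarification consistent with the paper's phrasing ``any solution to $M\theta^* = b$.''
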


The matrix $M$ can be expressed in more compact notation:
\begin{equation}
M = \Expect_\pi [(\grad \psi(X))^\transpose \grad \psi(X)].
\label{e:MDef}
\end{equation}
where $\grad \psi$ is defined in \eqref{e:gradpsi}. Similarly,   
\begin{equation}
b = \Expect_\pi \bigl[ [ \grad \psi(X) ]^\transpose \grad h_\alpha(X) \bigr].
\label{e:bDef}
\end{equation} 

As in the standard TD learning algorithm, the vector $b$ is represented using the function $h_\alpha$, which is unknown.  An alternative representation can be obtained whenever   \eqref{e:OmDream} is valid, and this is the basis of the $\grad$-LSTD algorithm.  

The following assumptions are used to justify this representation:


\noindent
{\bf{Assumption~{A3}}:}  For   any $C^1$ functions $f,g$ satisfying $f^2, g^2\in\Lv$ and $\|\grad f\|^2,\|\grad g\|^2\in\Lv$, 
 the following hold for  the stationary version of the Markov chain:
\begin{eqnarray}
\sum_{t=0}^{\infty}    \Bigl|  \Expect_\pi \bigl[ \grad f(X(t))\big)^\transpose    \Sens(t)   \grad g(X(0))  \bigr] \Bigr|  &<& \infty
\label{e:A3_true}
\\[.2cm]
\sum_{t=0}^{\infty}   \Expect_\pi  \Bigl[  \bigl| \grad f(X(t))\big)^\transpose    \Sens(t)   \grad g(X(0)) \bigr| \Bigr]  &<& \infty  
\label{e:A3_dream}
\end{eqnarray}

Under \eqref{e:A3_true}  the right hand side of \eqref{e:OmDream} is well defined a.e.\ $[\pi]$
when $c$ satisfies these conditions.   General conditions for the validity of \eqref{e:A3_true} 
 are established in  \cite{devkonmey16a}.   Theory to justify \eqref{e:A3_dream}
 is not as well developed.   The condition is related to the existence of a negative Lyapunov exponent.
 

Under these assumptions we obtain a stationary solution for the pair $(\bfmX,\bfvarphi)$.  The representation of $\bfvarphi$ requires the following shift-operator on sample space for a stationary version of $\bfmX$: For a random variable of the form $Z = F(X(r),N(r), \dots, X(s), N(s))$ with $r\le s$ we denote, for any integer $k$,
\[
\Theta^k Z = F(X(r+k),N(r+k), \dots, X(s+k), N(s+k))
\]
Consequently,  
\begin{equation}
\Theta^k \Sens(t) =    [\dffA(1+k) \dffA(2+k)\cdots  \dffA(t+k) ]^\transpose\, .
\label{e:shift}
\end{equation}

\Lemma{t:gradLSTDstat} follows immediately from the assumptions:  It follows from
the definition \eqref{e:EliVecDT} that this process follows the first recursion in \eqref{e:dTD1}.

\begin{lemma}
\label{t:gradLSTDstat} 
Suppose that Assumptions~A1--A3 hold, and that   $\|\psi\|^2$ and  $\|\grad \psi\|^2$ are in $\Lv$.    Then there is a version of the pair process $(\bfmX,\bfvarphi)$ that is stationary, with
\begin{equation}
\varphi(t) = \sum_{k=0}^{\infty} \alpha^k \big[ \Theta^{t-k} \Sens(k) \big] \gradpsi(X(t-k)), \quad t\in\ZZ\, .
\label{e:EliVecDT}
\vspace{-0.15in}
\end{equation} 
\bqed
\end{lemma}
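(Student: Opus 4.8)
The plan is to build the stationary process \emph{from} the series \eqref{e:EliVecDT} rather than from the recursion, check that the series is well defined, and then verify that the resulting process satisfies the first line of \eqref{e:dTD1}. First I would invoke (A1): $v$-uniform ergodicity furnishes a two-sided stationary version of $\bfmX=\{X(t):t\in\ZZ\}$ with marginal $\pi$, generated through \eqref{e:SP_disc} by the i.i.d.\ sequence $\bfmN=\{N(t):t\in\ZZ\}$, which by A2.\ref{a:N} may be taken independent of the initial condition. On this two-sided path the shift $\Theta$ of \eqref{e:shift} and the sensitivity matrices $\Sens(k)$ of \eqref{e:SP_sens_disc} are defined for all integer indices, and I simply \emph{take} the right-hand side of \eqref{e:EliVecDT} as the definition of $\varphi(t)$.

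The substantive step is convergence of this series. Passing the norm inside the expectation and shifting by $\Theta^{-(t-k)}$, stationarity collapses the $t$-dependence:
\[
\sum_{k=0}^{\infty}\alpha^k\,\Expect_\pi\bigl\|[\Theta^{t-k}\Sens(k)]\,\gradpsi(X(t-k))\bigr\| \;=\; \sum_{k=0}^{\infty}\alpha^k\,\Expect_\pi\bigl\|\Sens(k)\,\gradpsi(X(0))\bigr\|,
\]
which is independent of $t$. I would control the summand by sub-multiplicativity, $\|\Sens(k)\gradpsi(X(0))\|\le\|\Sens(k)\|\,\|\gradpsi(X(0))\|$, followed by Cauchy--Schwarz; the factor $\Expect_\pi\|\gradpsi(X(0))\|^2$ is finite since $\|\gradpsi\|^2\in\Lv$ and $\pi(v)<\infty$ under (A1), while geometric decay of $\Expect_\pi\|\Sens(k)\|^2$ is exactly the content behind \eqref{e:A3_dream} (a negative Lyapunov exponent for the cocycle $\{\dffA(t)\}$, itself controlled by A2 through $\sup_{x,n}\|\grad a(x,n)\|<\infty$). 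This makes the series absolutely convergent, both a.s.\ and in $L^1$, for every $t$. Because the bound does not depend on $t$, and because $\varphi(t)=\Theta^t\varphi(0)$ is the same fixed measurable functional of the shifted two-sided path, the pair $(\bfmX,\bfvarphi)$ is jointly stationary.

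Finally I would verify the recursion. From \eqref{e:SP_sens_disc} and the shift formula \eqref{e:shift} one reads off the cocycle identity $\dffA^\transpose(t)\,[\Theta^{t-j}\Sens(j-1)]=\Theta^{t-j}\Sens(j)$ for $j\ge 1$. Writing out $\alpha\,\dffA^\transpose(t)\varphi(t-1)$ from \eqref{e:EliVecDT}, re-indexing $k\mapsto j=k+1$, and applying this identity term by term (legitimate by the absolute convergence just established) produces $\sum_{j\ge1}\alpha^j[\Theta^{t-j}\Sens(j)]\gradpsi(X(t-j))$; adding the $j=0$ term $\gradpsi(X(t))$ recovers $\varphi(t)$ exactly, which is the first recursion of \eqref{e:dTD1}.

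The main obstacle is the convergence step. For $\alpha$ bounded away from unity the geometric factor $\alpha^k$ alone suffices, but obtaining the claim uniformly up to $\alpha=1$ rests essentially on \eqref{e:A3_dream}, i.e.\ on the decay of $\Expect_\pi\|\Sens(k)\gradpsi(X(0))\|$ supplied by a negative Lyapunov exponent; once that decay is in hand, stationarity and the recursion follow by the routine manipulations above.
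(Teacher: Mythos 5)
Your proposal is correct and follows essentially the same route as the paper, which simply takes the series \eqref{e:EliVecDT} as the definition of $\varphi(t)$ and observes that it satisfies the first recursion of \eqref{e:dTD1}; you have merely supplied the details (the two-sided stationary construction, the cocycle identity $\dffA^\transpose(t)[\Theta^{t-j}\Sens(j-1)]=\Theta^{t-j}\Sens(j)$, and the convergence estimate) that the paper leaves implicit. The one point to flag is that \eqref{e:A3_dream} as stated controls the scalar pairings $\grad f(X(t))^\transpose\Sens(t)\grad g(X(0))$ rather than $\Expect_\pi\|\Sens(k)\|^2$ directly, so your Cauchy--Schwarz step reads slightly more into A3 than its literal wording --- but this is consistent with the paper's own informal appeal to a negative Lyapunov exponent for the cocycle $\{\dffA(t)\}$.
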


\begin{proposition}
\label{t:dLSTDconverges}
Suppose that Assumptions~A1--A3 hold, and that $c^2$, $\| \grad c\|$,  $\|\psi\|^2$ and  $\|\grad \psi\|^2$ are in $\Lv$. Suppose moreover that the matrix $M$ appearing in \eqref{e:MDef} 
  is full rank. Then, for the stationary  process $(\bfmX,\bfvarphi)$, 
 the $\grad$-LSTD learning algorithm is consistent: for any initial $b(0)  \in\Re^\ell$  and   $M(0)>0$,  
\[
\theta^* = \lim\limits_{t\to\infty} M^{-1}(t) b(t).
\]
Moreover, with probability one,
\[
\barc=
\lim_{t\to\infty}  \barc(t) ,\quad \pi(h_\alpha^{\theta^*}) = 
\lim_{t\to\infty} \barh_\alpha(t) ,
\]
and hence $\lim_{t\to\infty} \{
-\barh_\alpha(t)  +  \barc(t) /(1-\alpha)\} =\kappa(\theta^*)$.
\end{proposition}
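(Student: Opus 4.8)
The plan is to exploit the fact that, with $\gamma_t=1/t$, the recursions for $M(t)$ and $b(t)$ in \eqref{e:dTD1} collapse to exact Cesàro averages. Unrolling, the weight on the $s$-th summand is $\gamma_s\prod_{r=s+1}^t(1-\gamma_r)=\tfrac1t$, while the weight on the initialization is $\prod_{s=1}^t(1-\gamma_s)=0$ (since $\gamma_1=1$). Hence for every $t\ge1$,
\[
M(t)=\frac1t\sum_{s=1}^t \grad\psi(X(s))^\transpose\grad\psi(X(s)),\qquad b(t)=\frac1t\sum_{s=1}^t \varphi(s)^\transpose\grad c(X(s)),
\]
so the initial conditions $M(0),b(0)$ are washed out after one step. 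Since $(\bfmX,\bfvarphi)$ is stationary by \Lemma{t:gradLSTDstat}, these are averages of stationary sequences, and I would invoke the ergodic theorem (equivalently the LLN for $v$-uniformly ergodic chains of \cite{MT}). The integrability needed comes from $\|\grad\psi\|^2\in\Lv$ for $M(t)$, and from combining \eqref{e:EliVecDT} with $\|\grad c\|,\|\grad\psi\|^2\in\Lv$ and the summability bound \eqref{e:A3_dream} for $b(t)$. This gives the almost sure limits $M(t)\to M$ of \eqref{e:MDef} and $b(t)\to b_\infty\eqdef \Expect_\pi[\varphi(0)^\transpose\grad c(X(0))]$.

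The crux is to identify $b_\infty$ with the vector $b$ of \eqref{e:bDef}. Substituting \eqref{e:EliVecDT} at $t=0$ and using linearity,
\[
b_\infty=\sum_{k=0}^\infty \alpha^k\,\Expect_\pi\bigl[\gradpsi(X(-k))^\transpose\,(\Theta^{-k}\Sens(k))^\transpose\,\grad c(X(0))\bigr],
\]
and I would apply the shift $\Theta^{k}$ under the stationary expectation to move time $-k$ to $0$ and time $0$ to $k$, so each term equals $\Expect_\pi[\gradpsi(X(0))^\transpose\Sens(k)^\transpose\grad c(X(k))]$. Conditioning on $X(0)$ and using \eqref{e:Qt} turns the inner conditional expectation into $Q^k\grad c\,(X(0))$; summing against $\alpha^k$ and invoking \eqref{e:Omegarep_disc} and \eqref{e:OmDream} yields $b_\infty=\Expect_\pi[(\grad\psi(X))^\transpose\Omega_\alpha\grad c(X)]=\Expect_\pi[(\grad\psi(X))^\transpose\grad h_\alpha(X)]=b$. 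The main obstacle is justifying the two interchanges — the infinite sum with the outer expectation, and the inner conditional expectation with the gradient/resolvent — and this is precisely the role of Assumption~A3: the absolute bound \eqref{e:A3_true} makes $\grad h_\alpha=\Omega_\alpha\grad c$ well defined a.e.\ $[\pi]$, while \eqref{e:A3_dream} supplies the dominated-convergence/Fubini hypothesis for the exchange. With $M$ full rank and $M(t)\to M$ we get $M^{-1}(t)\to M^{-1}$, so $\theta(t)=M^{-1}(t)b(t)\to M^{-1}b=\theta^*$ by \eqref{e:OptimalTheta}, which is the asserted consistency.

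For the averaged quantities, $\barc(t)\to\barc=\pi(c)$ is the LLN already noted (using $\|c\|_v<\infty$). For $\barh_\alpha(t)$, reducing \eqref{e:dTD4} again to a Cesàro average $\tfrac1t\sum_{s=1}^t\theta(s)^\transpose\psi(X(s))$, I would split it as $\theta^{*\transpose}\bigl(\tfrac1t\sum_s\psi(X(s))\bigr)$ plus $\tfrac1t\sum_s(\theta(s)-\theta^*)^\transpose\psi(X(s))$. The first term converges to $\pi(h_\alpha^{\theta^*})$ by the LLN (using $\|\psi\|\in\Lv$, which follows from $\|\psi\|^2\in\Lv$ and $v\ge1$), and the second vanishes by a Toeplitz/Cesàro argument, since $\|\theta(s)-\theta^*\|\to0$ while $\tfrac1t\sum_s\|\psi(X(s))\|$ stays bounded. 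Combining the two limits with the definition $\kappa(\theta)=-\pi(h_\alpha^{\theta})+\pi(c)/(1-\alpha)$ gives
\[
-\barh_\alpha(t)+\frac{\barc(t)}{1-\alpha}\;\longrightarrow\;-\pi(h_\alpha^{\theta^*})+\frac{\pi(c)}{1-\alpha}=\kappa(\theta^*),
\]
which completes the argument.
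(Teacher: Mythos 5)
Your argument is correct and follows essentially the same route as the paper: the recursions are recognized as Ces\`aro averages of the stationary pair process, the Law of Large Numbers gives $M(t)\to M$ and $b(t)\to \Expect_\pi[\varphi(0)^\transpose\grad c(X(0))]$, and the identification of this limit with $b$ of \eqref{e:bDef} via the shift under stationarity, the semigroup $Q^k$, and Fubini under Assumption~A3 is exactly the content of the paper's \Lemma{e:bRep} (read in the reverse direction). Your explicit Toeplitz/Ces\`aro treatment of $\barh_\alpha(t)$ merely fills in a step the paper asserts without detail.
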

 
The remainder of this section consists of a proof of this proposition.  We begin with a representation of $b$:

\begin{lemma}
\label{e:bRep}
Under the assumptions of \Prop{t:dLSTDconverges},  
\begin{equation}
\begin{aligned}
b^\transpose &= \sum_{t=0}^{\infty} \alpha^t    \Expect_\pi \bigl[ \bigl ( \Sens^\transpose(t) \grad c(X(t)) \bigr)^\transpose \gradpsi(X(0))   \bigr]    
\\
&=  \Expect \Bigl [ \big(\grad c(X(0))\big)^\transpose \varphi(0) \Bigr ]\, .
\end{aligned}
\label{e:bpsi_disc_init}
\end{equation}
in which $\bfmX$ is stationary, with marginal $\pi$.
\end{lemma}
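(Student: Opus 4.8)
The plan is to begin from the definition \eqref{e:bDef}, $b = \Expect_\pi[(\grad\psi(X))^\transpose\grad h_\alpha(X)]$ with $X\sim\pi$, and to insert the gradient representation \eqref{e:OmDream}. Transposing and passing to the stationary chain gives $b^\transpose = \Expect_\pi[(\grad h_\alpha(X(0)))^\transpose\grad\psi(X(0))]$. Substituting $\grad h_\alpha = \Omega_\alpha\grad c = \sum_{t\ge0}\alpha^t Q^t\grad c$ and expanding $Q^t$ through its definition \eqref{e:Qt} as a conditional expectation, the tower property together with the invariance of $\pi$ collapses the nested conditional expectations into a single expectation over the stationary process. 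Writing $(\clS^\transpose(t)\grad c(X(t)))^\transpose = (\grad c(X(t)))^\transpose\clS(t)$, this produces the first displayed identity,
\[
b^\transpose = \sum_{t=0}^{\infty}\alpha^t\,\Expect_\pi\bigl[(\grad c(X(t)))^\transpose\clS(t)\,\grad\psi(X(0))\bigr].
\]

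To reach the second identity I would exploit the shift operator $\Theta$ of \eqref{e:shift}. Because the stationary law of $\bfmX$ is $\Theta$-invariant, each summand is unchanged after applying $\Theta^{-t}$; this shift carries $X(t)\mapsto X(0)$, $X(0)\mapsto X(-t)$ and $\clS(t)\mapsto\Theta^{-t}\clS(t)$, so that
\[
\Expect_\pi\bigl[(\grad c(X(t)))^\transpose\clS(t)\grad\psi(X(0))\bigr] = \Expect_\pi\bigl[(\grad c(X(0)))^\transpose[\Theta^{-t}\clS(t)]\grad\psi(X(-t))\bigr].
\]
Summing on $t$ and factoring $(\grad c(X(0)))^\transpose$ out of the series, the surviving sum $\sum_{t\ge0}\alpha^t[\Theta^{-t}\clS(t)]\grad\psi(X(-t))$ is exactly the eligibility vector $\varphi(0)$ supplied by \Lemma{t:gradLSTDstat} (equation \eqref{e:EliVecDT} at $t=0$). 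This identifies $b^\transpose$ with $\Expect[(\grad c(X(0)))^\transpose\varphi(0)]$, completing the argument.

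The main obstacle, and the one point where the hypotheses are genuinely used, is the justification of the two interchanges of infinite summation with expectation, as well as the implicit exchange of $\grad$ with $\Expect$ inside $\grad h_\alpha = \Omega_\alpha\grad c$. The latter is covered by \eqref{e:OmDream}, whose validity a.e.\ $[\pi]$ rests on \eqref{e:A3_true}. For the Fubini interchange and the convergence of the series defining $\varphi(0)$, I would invoke the absolutely-convergent bound \eqref{e:A3_dream} of Assumption~A3 applied coordinatewise with $f=c$ and $g=\psi_j$, together with the discount factor $\alpha^t\le1$; integrability of the individual terms under $\pi$ follows from the standing hypotheses $\|\grad c\|,\|\grad\psi\|^2\in\Lv$ and $v$-uniform ergodicity. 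Once these exchanges are licensed, every displayed equality above is an identity and the proof is immediate.
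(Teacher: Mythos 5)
Your proposal is correct and follows essentially the same route as the paper's own proof: insert the representation $\grad h_\alpha = \Omega_\alpha \grad c$ into the definition of $b$, expand the resolvent as a sum over $Q^t$ to obtain the first identity, then apply the shift $\Theta^{-t}$ and stationarity to re-index each summand, and finally exchange sum and expectation via Fubini under Assumption~A3 to recognize the eligibility vector $\varphi(0)$ from \eqref{e:EliVecDT}. Your explicit attention to which part of A3 licenses which interchange is, if anything, slightly more careful than the paper's presentation.
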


\begin{proof}
The representation \eqref{e:OmDream} is valid under (A3).  
Using this and \eqref{e:SP_sens_disc} gives the first representation in \eqref{e:bpsi_disc_init}:
\begin{equation}
\begin{aligned}
b^\transpose  &= \int \Expect_x \big [(\Omega_\alpha \grad c(x))^\transpose \gradpsi(x) \big ]\pi(dx) 
\\
&= \sum_{t=0}^{\infty} \alpha^t  \int \Expect_x \big [( \Sens^\transpose(t)  \grad c(x))^\transpose \gradpsi(x) \big ]\pi(dx)
\\
&= \sum_{t=0}^{\infty} \alpha^t    \Expect_\pi \bigl[ \bigl ( \Sens^\transpose(t) \grad c(X(t)) \bigr)^\transpose \gradpsi(X(0))   \bigr]   
\label{e:bpsi_disc_init2}
\end{aligned}
\end{equation}

\spm{Is this useful?
\\
Note that each $\dffA(s)$ is a $d \times d$ matrix, $\grad c (X(t))$ is a $d \times 1$ column vector, and as mentioned earlier $\grad \psi$ is a $d \times \ell$ matrix. Therefore, the right hand side is a $1 \times \ell$ row vector, implying that $b$ is a $l \times 1$ column vector.}

  \notes{{Proposition to formalize stationarity? What should the proposition be?  
\\
\bf I am pretty sure the expression below is false}
\\
I think these concerns are now resolved}

Stationarity implies that for any $t,k\in\ZZ$,
\begin{equation*}
\begin{aligned}
\Expect_\pi \Bigl[ \Bigl (&\Sens^\transpose(t)  \grad c(X(t)) \Bigr)^\transpose \gradpsi(X(0)) \Bigr] 
\\
&=  \Expect_x \Bigl[ \Bigl ( [\Theta^k\Sens^\transpose(t) ]\grad c(X(t+k)) \Bigr)^\transpose \gradpsi(X(k)) \Bigr].
\end{aligned}
\end{equation*} 
Setting $k=-t$, the first representation in \eqref{e:bpsi_disc_init} becomes:
\[
\begin{aligned}
\hspace{-.5cm}
b^\transpose  &= \sum_{t=0}^{\infty} \alpha^t \Expect_x \Bigl[ \big(\grad c(X(0))\big)^\transpose \big( \Theta^{-t} \Sens(t) \big)  \gradpsi(X(-t)) \Bigr]
 \\
&= \Expect_x \Bigl[\big(\grad c(X(0))\big)^\transpose \Bigl(  \sum_{t=0}^{\infty} \alpha^t  \big( \Theta^{-t} \Sens(t) \big) \gradpsi(X(-t)) \Bigr) \Bigr]\, .
\end{aligned}
\]
The last equality   is obtained under Assumption~A3 by applying Fubini's theorem,  and this completes the proof.
\end{proof}

\paragraph*{Proof of \Prop{t:dLSTDconverges}}   \Lemma{e:bRep} combined with the stationarity assumption implies that
\[
\begin{aligned}
\lim_{T\to\infty } \frac{1}{T} b(t) &=
\lim_{T\to\infty } \frac{1}{T} \sum_{t=1}^T \varphi(t)^\transpose \grad c(X(t))  
\\
& = \Expect[\varphi(0)^\transpose \grad c(X(0)) ] =b 
\end{aligned}
\]
Similarly,  for each $T\ge 1$ we have 
\[
M(T) = M(0) + \sum_{t=1}^{T} (\gradpsi(X(t)))^\transpose \gradpsi(X(t)),
\]
and by the Law of Large Numbers we once again obtain 
\[
\lim_{T\to\infty } \frac{1}{T}  M(T) = M. 
\]
Combining these results establishes  $\theta^* = \lim\limits_{t\to\infty} M^{-1}(t) b(t)$.

Convergence of $ \{\barc(t)\}$ is identical,  and convergence of $ \{\barh_\alpha(t)\}$ also follows from the Law of Large Numbers since we have convergence of $\theta(t)$.  
\bqed

\subsection{Extensions}
\label{s:extend}
 
 \subsubsection*{Extension to average-cost}    Nowhere in the proof   of \Prop{t:dLSTDconverges} do we use the assumption that $\alpha<1$.   It is not difficult to establish that under the conditions of the proposition,  the $\grad$-LSTD~learning algorithm is convergent when $\alpha=1$,  and the limit solves the quadratic program
\[
 \theta^* = \argmin_{\theta} \|h^\theta - h \|_{\pi,1}
\]
in which $h$ is a solution to Poisson's equation.   

\subsubsection*{Nonlinear parameterization}  If the parameterized family $\{ h^\theta_\alpha\}$ is nonlinear in $\theta$,
 then the optimization problem \eqref{e:OptThetaGhalphaDT} may not be convex.   It is possible to construct algorithms to compute a local minimum through stochastic gradient techniques.

The basis should be chosen so that $h^\theta$ and $\gh_\alpha^\theta$ are continuous function of $x$, and continuously differentiable in $\theta$.  On denoting $\psi_i^\theta = {}$ the partial derivative of $ h_\alpha^\theta$ with respect to $\theta_i$,
the first order condition for optimality of \eqref{e:OptThetaGhalphaDT} is,
\spm{eqn cut for CDC}
\[ 
0 = \Expect_\pi [   (\grad \psi^{\theta^*}(X))^\transpose(\grad h_\alpha^{\theta^*}(X) - \grad h_\alpha(X))  ]\,,
\]
where the $i$th column of the gradient matrix $\grad \psi^{\theta^*}$ is equal to $ \grad \psi_i^{\theta^*}$.

The inner product $ \langle   h_\alpha, \psi_i^{\theta^*} \rangle_{\pi,1} $ depends on the unknown function $h_\alpha$, but this can be transformed into a practical algorithm.   For example,  a gradient descent like stochastic approximation algorithm is defined through the recursions,
\[
\begin{aligned} 
  \varphi(t)  &= \alpha \varphi(t-1) +  \psi^{\theta(t-1)}(X(t))
\\
d(t) & =  \grad \psi^{\theta(t-1)}(X(t))^\transpose \grad h_\alpha^{\theta(t-1)}(X(t)) -   \varphi(t)^\transpose \grad c(X(t))   
\\
\theta(t) & = \theta(t-1) -   \gamma_t  \grad \psi^{\theta(t-1)}(X(t)) d(t).
\end{aligned} 
\]
Stability analysis of these algorithms will be the subject of future research.   
 
Extensions to Markov models in continuous time are contained in the online version of the paper \cite{devkonmey16a}.

\section{Continuous time}
\label{sec:CT}

To highlight the main ideas, we restrict to a one-dimensional diffusion on $\Re$,  
 \spm{no square before Brownian motion.   Also, the subscript $x$ looked like $X(0)=x$}
\begin{equation}
 dX(t) = a(X(t)) \,dt + \sigma(X(t)) \, dB(t),
\label{e:SDE}
\end{equation}
in which $\bfmB$ is standard Brownian motion, and the function $a\colon\Re\to\Re$ is Lipschitz continuous.  For simplicity we also take $\sigma(x)\equiv 1$.  Its semigroup is denoted $\{P^t\}$, and its differential generator is defined for $C^2 $ functions $f\colon\Re\to\Re$ via,  $\generate f = a f' + \half f''$.  Assumption~A1 is imposed, where again the ergodic limit \eqref{e:v-uni} is equivalent to the existence of a Lyapunov function, defined with respect to the differential generator.
\spm{really should reference dowmeytwe or ...}

The discounted cost is based on a discount rate $\gamma>0$, with definition similar to 
 \eqref{e:DCOE_disc}:
\begin{equation*}
h_\gamma(x)=  \int_{t}^{\infty} e^{-\gamma t}  \Expect_x [c(X(t)) ]\, .
\label{e:DCOE_disc_cts}
\end{equation*}
The   {\em resolvent kernel} $R_{\gamma}$ is the Laplace transform,
\begin{equation}
R_{\gamma} \eqdef   \int_0^\infty  e^{-\gamma t} P^t\, dt ,
\quad\gamma>0\, ,
\label{e:resolvent_cont}
\end{equation}
so that the value function can be expressed $h_\gamma = R_\gamma c$. 

If the value function is $C^2$, then it solves the dynamic programming equation,
\begin{equation}
\clD h_\gamma = \gamma h_\gamma - c
\label{e:DCOEcts}
\end{equation}
A representation for the derivative $h_\gamma'$ is obtained in the following subsection, from which we obtain a continuous-time analog of the $\grad$-LSTD algorithm.


 \subsection{Gradient representation}

The dynamic programming equation \eqref{e:DCOEcts} suggests the inverse formula $R_\gamma = [I_\gamma -\clD]^{-1}$.  This formula is valid on a suitable domain,  and with suitable interpretation \cite{meytwe93e}.

The representation for $h_\gamma'$ makes use of the the generalized resolvent kernel  \cite{nev72,meytwe93e,devkonmey16a}:
For a measurable function $G\colon\Re\to\Re$, and measurable functions $f$ in some domain,
 \begin{equation}
 R_G f\, (x) \eqdef \int_0^\infty \Expect_x\Bigl[ \exp\Bigl(-\int_0^t G(X(s))\, ds \Bigr) f(X(t))\Bigr]\, dt.
\label{e:Neveu}
\end{equation}
In  \cite{nev72,meytwe93e} it is assumed that $G>0$ everywhere. These conditions are relaxed in \cite{konmey03a,devkonmey16a}.     

\spm{what conditions??
Under these conditions, the formula $R_G = [I_G -\clD]^{-1}$ is valid on some domain.
}
 
To apply these concepts,  differentiate with respect to $x$ each side of \eqref{e:DCOEcts} to obtain
 \[
 a' h_\gamma' + a h_\gamma '' + \half h_\gamma '''
= \frac{d}{dx}  (\clD h_\gamma) = \gamma h_\gamma' - c'
\]
Rearranging terms gives $[I_G -\clD] h_\gamma' = c'$, with $G=- a' + \gamma$.  Provided the inverse exists, we obtain
\begin{equation}
 h_\gamma' = [I_G - \generate]^{-1} c' = R_G c'\, .
\label{e:gradhrep}
\end{equation}
These steps can be justified subject to a growth condition on $c'$,  and a Lyapunov drift condition for the diffusion \cite{devkonmey16b}.
 
 \spm{
Note that any diffusion process with negative drift satisfies this property, and as long as $a' < \gamma$, the representation \eqref{e:gradhrep} is valid for $0 \leq \gamma < 1$, wherein $h_0'$ represents the derivative of the solution to Poisson's equation. 
\\
 we don't have to do that!!}
 
\subsection{$\grad$-LSTD-learning}
 
The goals are unchanged in this continuous time setting:  We seek the parameter $\theta^*$ that solves
 \begin{equation}
 \theta^* = \argmin_\theta \| h_\gamma^\theta - h_\gamma \|^2_{\pi,1}
\label{e:OptThetaCon}
 \end{equation}
The $\grad$-LSTD-learning algorithm designed to solve this problem is defined by these ODEs: 
\begin{subequations}
\begin{eqnarray}
\ddt
 \varphi(t) & = & [a'(X(t)) - \gamma] \varphi(t) + \psi'(X(t)) 
 	\label{e:dTD3_cts}
\\
\ddt 
b(t) & = &    \varphi(t)   c'(X(t)) 
\label{e:dTD2_cts}
\\
\ddt M(t) & = &    \psi'(X(t))   {\psi'}^\transpose(X(t))
\label{e:dTD1_cts}
\end{eqnarray}
\end{subequations}
generating estimates of $\theta^*$ as before via $\theta(t) = M(t)^{-1} b(t)$.

\smallbreak

The construction and analysis of this algorithm is based on the characterization of $\theta^*$ for the linearly parameterized approximation. We close this section with an overview of the main ideas.

As in the discrete time case, the objective is quadratic in $\theta$,   and 
as in \Prop{t:LSTDquad} we obtain $\theta^*= M^{-1} b$ with  $M$ defined in \eqref{e:MDef},  and $b = \Expect_\pi \bigl[ \psi'(X)    h_\gamma'(X) \bigr]$.   

The main difference in the   continuous time case is the alternate representation for $b$.  We again require assumptions to ensure the existence of a steady-state solution to  	\eqref{e:dTD3_cts}, of the form
\[
 \varphi(t)
 =
  \int_{-\infty}^t  \exp\Bigl(-\int_r^t G(X(s))\, ds  \Bigr) \psi'(X(r))   \,  dr,\quad t\in\Re.
\]
with $G(x)= \gamma-a'(x)$, $x\in\Re$.  This requires a version of Assumption~A3 in the continuous time setting;  in \cite{devkonmey16b} it is shown that this holds under a Lyapunov drift condition. 

 When these steps are justified we can conclude that  $b = \Expect_\pi \bigl[ [  \varphi(t)   c'(X(t)) \bigr]$,  and convergence of the $\grad$-LSTD algorithm then follows as in the discrete-time setting.

 \section{Simulation Results}
\label{sec:sim}

This section contains a survey of numerical experiments to illustrate the general theory,  and suggest possible
extensions of the algorithm.


Common elements in all of our experiments are a linear parameterization for the value function, and the implementation of the $\grad$-LSTD algorithm.  Comparisons with other approaches include 
the standard LSTD algorithm for discounted cost, and the regenerative LSTD algorithm of \cite{CTCN,huachemehmeysur11} for average cost applications where there is regeneration.  
The standard TD($\lambda$) algorithm was also considered, but in each example the variance was found to be  several orders of magnitude greater than alternatives.   A matrix gain variant called TD-$K$($\lambda$) is introduced to obtain a better algorithm for comparison.  For this linearly parameterized setting, the matrix gain algorithm is essentially equivalent to the LSTD($\lambda$) algorithm of \cite{boy02}.

The asymptotic covariance is used to compare these algorithms:   
\begin{equation}
\Sigma = \lim_{t\to\infty} \frac{1}{t} \Expect[\tiltheta(t)\tiltheta(t)^\transpose]
\label{e:aVar}
\end{equation}
where $\tiltheta(t)=\theta(t)-\theta^*$.  Under general conditions, the asymptotic covariance   coincides with the covariance in the Central Limit Theorem.  It is estimated by observing a histogram following multiple runs of each algorithm.

\spm{for journal: The existence of a finite limit requires conditions on the ODE associated with the stochastic approximation algorithm cite{whattodo-Borkarhasincorrectformula}.}

Two extensions are considered for a specific example: the approximation of relative value function for the speed-scaling model of \cite{chehuakulunnzhumehmeywie09}.  First, for this reflected process evolving on $\Re_+$, it is shown that the sensitivity process can be defined, subject to conditions on the dynamics near the boundary.   Second,  the algorithm is tested for an example with \textit{discrete state space}.  There is no apparent justification for this approach, but it worked well in the examples considered.

\subsection{Linear stochastic process}

A scalar linear model is ideal for illustrating the difference between $\grad$-LSTD learning and alternative approaches.   The dynamics are given by the recursion 
\[
X(t+1) = a X(t) + N(t+1)
\]
in which $a \in (0,1)$ and $\bfmN$ is Gaussian $\clN(0,1)$.

In all of the numerical results surveyed here, the cost function is defined to be the quadratic,  $c(x)=x^2$,   $a=0.7$,  and we restrict to the discounted-cost.   

The relative value function and the discounted-cost value functions are quadratic in this case, and also symmetric:  $h(x)=h(-x)$. Consequently, the function class obtained using the basis $\psi(x) = (1,x^2)^\transpose$ includes the true value function.  
On taking the gradient, the function space collapses from two dimensions to one,  and the growth rate of the functions is reduced from quadratic to linear: $\grad\psi(x) = (0,2x)^\transpose$. 
\spm{shouldn't we mention $\kappa^*(\theta)$ somewhere here?}

\spm{I'm so glad you went with bound $b$ and $M$ definitions. 
Could you revise v5 to follow this form?
\\
Note that I reversed the order:  we need varphi to come before b.   Make sure we stick to this convention everywhere}

\notes{Do we even need the following? All we are saying is $\dffA = a$ in this case. We can save space.
\\
spm: revised slightly.
}
For this linear model, the first recursion for the $\grad$-LSTD algorithm defined in \Section{s:TD} becomes
\begin{equation}
\begin{aligned}
\varphi(t) & = \alpha a \varphi(t-1) +  \grad \psi(X(t)),
\vspace{-0.3in}
\end{aligned}
\label{e:dLSTDLinear}
\end{equation}  
while the corresponding equation in the standard LSTD algorithm is
\vspace{-0.12in}
\begin{equation}
\begin{aligned}
\varphi(t) & =  \alpha \varphi(t-1) +    \psi(X(t)).
\vspace{-0.15in}
\end{aligned}
\label{e:LSTDLinear}
\end{equation}


Both these algorithms are consistent.  However, two differences suggest that the asymptotic covariance is much smaller when using the $\grad$-LSTD algorithm.  First is the
additional   discounting factor $a$  appearing in \eqref{e:dLSTDLinear}, but absent in  \eqref{e:LSTDLinear}.   This is why the LSTD asymptotic covariance grows without bound as $\alpha$ tends to $1$.   A second advantage of the 
$\grad$-LSTD algorithm is that the gradients reduce the growth rate of each function of $x$.  In this case, reducing the quadratic growth of $c$ and $\psi$ to the linear growth of their gradients.

Experiments were run for two different discounting factors, $\alpha=0.9$ and $\alpha=0.99$.  Variance estimates were obtained by conducting $10^3$ independent simulations for each set of parameters tested. 

\begin{figure}[h]
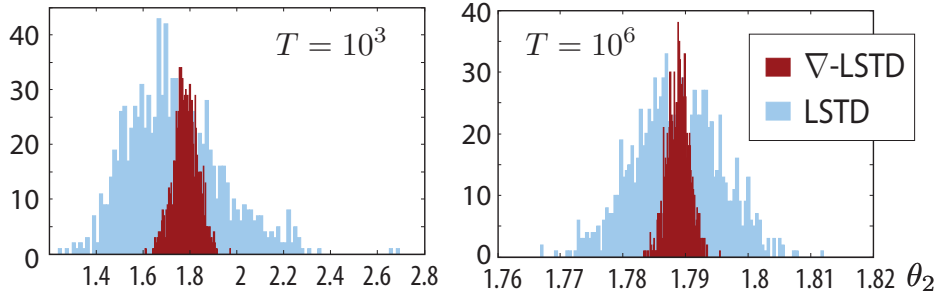

\Ebox{.75}{HistLinear10e3+6p9-LR.pdf}
\caption{Histogram of $\theta_2(T)$ using both TD-learning and $\grad$TD-learning, $\alpha=0.9$.}
\label{f:HistLinearAllp9}	
\end{figure}


The optimal parameter is $\theta^*=(16.1, 1.79)^\transpose$ when $\alpha=0.9$. \Figure{f:HistLinearAllp9}	
 shows the resulting histograms for $\theta_2(T)$ (the coefficient of $\psi_2(x) = x^2$) for two time horizons, $T=10^3$ and $10^6$.
 
It was found that convergence of the $\grad$-LSTD-learning algorithm is about $10$ times faster than the LSTD algorithm. That is, for a given time $T$, the variance of $\theta_2$ estimated using $\grad$TD-learning algorithm is about the same as that of the TD-learning algorithm which has run for $10$ times longer.

\notes{{Let's discuss, but for now let me say this.. \\
		Here we have a parameterization that includes the true $h^*$.   Hence TD(lambda) is consistent (Galerkin is the key to validation). \\
		If the parameterization does not include the true $h^*$ then there may indeed be bias, but we have to discuss the definition of bias.}}

The difference in performance of the two algorithms is greater as $\alpha$ is increased.   The case $\alpha=0.99$ is considered next, for which $\theta^*=(192.27, 1.9421)^\transpose$.   \Figure{f:HistLinearAllp99} contains histograms for the same two time horizons.   

In conclusion, for this example, the asymptotic covariance of the $\grad$-LSTD algorithm is bounded uniformly over $0<\alpha<1$, and it can also be used to estimate the solution to Poisson's equation.

\begin{figure}[h]
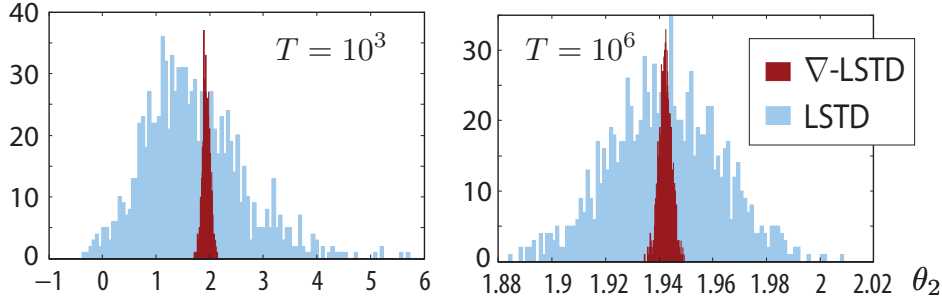

 \Ebox{.75}{HistLinear10e3+6p99-LR.pdf}
		\caption{Histogram of $\theta_2(T)$ using both TD-learning and $\grad$TD-learning, $\alpha=0.99$.}
		\label{f:HistLinearAllp99}	
\end{figure}


\subsection{Dynamic speed scaling}

Dynamic speed scaling is a popular approach to power management in computer system design. The goal   is to control the processing speed so as to optimally balance energy and delay costs; 
this can be done by reducing (increasing) the processor speed at times when the workload is small (large). For the purposes of this paper, speed scaling is a simple stochastic control problem: a single server queue with a controllable service rate.

A regenerative form of LSTD learning was applied in \cite{chehuakulunnzhumehmeywie09} for this example to approximate the solution to the average-cost optimality equation.   Approximate policy iteration algorithm was implemented, in which the LSTD algorithm provided an approximate relative value function at each iteration of the algorithm.  

The discrete time MDP (Markov Decision Process) model is described as follows:   At each time $t$,  the state $X(t)$ is interpreted either as the queue length, or the workload in the system;   $N(t) \geq 0$ denotes the number of job arrivals,  and $U(t)$ the service completion at time $t$.  This is subject to the constraint $0 \leq U(t) \leq X(t)$.  The evolution equation is the controlled random walk: 
\begin{equation}
X(t+1) = X(t) - U(t) + N(t+1)\, , \quad t \geq 0.
\label{e:CRW}
\end{equation}  
Under the assumption that $\bfmN$ is i.i.d.,  and $\bfmU$ is obtained using a state feedback policy $U(t) = \fee(X(t))$,  the controlled model  is a Markov chain of the form \eqref{e:SP_disc}.


In the experiments that follow we focus exclusively on the average-cost setting, with $c(x,u) = x+u^2/2$, and
\begin{equation}
\fee(x) = \min(x,1+\epsy \sqrt{x})\, ,
\label{e:EpsilonPolicy}
\end{equation}
with $ \epsy > 0$.  This   is  similar in form to the optimal average cost policy calculated in \cite{chehuakulunnzhumehmeywie09}.  It is shown in  \cite{chehuakulunnzhumehmeywie09} that the value function is approximated by the function $h^\theta(x) = \theta^\transpose \psi(x)$ for some $\theta\in\Re^2_+$, with $\psi(x) = (x^{3/2} ,x)^\transpose$.  As in the linear example, the gradient  $\grad \psi(x) = (\frac{3}{2} x^{1/2} , 1)^\transpose$ has much slower growth as a function of $x$.  

Implementation of the $\grad$-LSTD algorithm requires attention to the boundary of the state space.
The sensitivity process $\{\Sens(t)\}$ as defined in \eqref{e:SensDef} requires that the state space be open, and that the dynamics are smooth.  Both of these assumptions are violated in this example.  However,  we do have a representation for the right derivative,
\spm{deleted for CDC}
which evolves according to the recursive equation,
\begin{equation}
\clS(t+1)  = \dffA^\transpose(t+1) \clS(t)  = [1- \ddxp \fee\, (X(t))]\clS(t)  
\label{e:SensSS}
\end{equation}


 We begin with the case in which the marginal of $\bfmN$ is exponential.  In this case the right derivatives and ordinary derivatives coincide a.s..

\subsubsection*{Exponential arrivals}

The marginal distribution of $\bfmN$ was taken to be the unit-mean exponential.

When $\bfmX$ evolves on $\Re_+$ with policy $\fee$ as defined above, the form of the $\grad$-LSTD algorithm is unchanged from  the definition given in \Section{s:TD}.  The recursion for $\bfvarphi$ in \eqref{e:dTD1}
is implemented based on \eqref{e:SensSS}:  
\begin{equation}
\begin{aligned}
\dffA(t+1) 
& =  \one\{X(t) > \bar{\epsy}\} \bigl[ 1 - \half \epsy X(t)^{-1/2} \bigr]
\end{aligned}
\label{e:diffASS}
\end{equation}
where $ \bar{\epsy} = \half ( \epsy + \sqrt{\epsy^2 + 4})$.  

\spm{ok for journal:
Hence  the recursion regenerates:  
$\varphi(t+1)  =  \grad \psi(X(t+1))$ whenever $X(t)\le \bar{\epsy}$.
}

 \notes{Do we need all this? Because the probability of $X(t) = 0$ is $0$??  {I don't think that zero is the sole issue.  We imposed the constraint that the state space is $\Re^d$ throughout, so we have to explain how we can relax this assumption.  I have written an essay above}}

\spm{I don't know about this ... 
which implicitly takes care of the fact that $\Sens(t) = 0$ if $X(t-1) = 0$. }

\spm{commented b defn for CDC}

The regenerative LSTD algorithm used in \cite{chehuakulunnzhumehmeywie09} is not directly applicable in this example.   Various forms of the TD($\lambda$) algorithm were tested, but all appeared to have infinite asymptotic variance.   The introduction of a matrix gain resulted in improved performance.  The examples that follow compare the $\grad$-LSTD algorithm with the best results we were able to obtain using other methods.

The matrix gain algorithm will be called TD-$K$($\lambda$).  It is identical to the standard algorithm, except for the introduction of a matrix gain sequence $\{K_t\}$ in the following:
\begin{equation}
	\begin{aligned}
	\theta(t+1) & = \theta(t) + \gamma_{t+1}  K_t z(t)  d(t+1) 
	\\
	d(t+1) & =  \tilc(t) +  \bigl[\psi(X(t+1)) -  \psi(X(t)) \bigr]^\transpose  \theta(t)
	\\
	\barc(t+1) & = \barc(t) +  \gamma_{t+1}\bigl[-\barc(t) + c(X(t+1)\bigr]
	\\
	z(t+1) & = \lambda z(t) + \psi(X(t+1)).
	\label{e:TD0}
	\end{aligned}
\end{equation}
 $\tilc(t) = c(X(t)) - \barc(t)$.
 \spm{for journal let's use $\tilde\psi$}
To optimize a constant gain $K_t\equiv K$ over all $\ell\times\ell$ matrices,  the solution is obtained by considering the associated ODE,  $\dot \vartheta = KA (\vartheta - \theta^*)$.
The choice $K= -A^{-1}$ is known to be optimal.
\spm{ref to be found}
In this example we have,
\notes{$\vartheta$ is not defined right?
spm: what's wrong?  Note that I do have a minus sign}
\vspace{-0.03in}
\[
A =\Expect\bigl[ z(t) (\psi(X(t+1)) -  \psi(X(t)) )^\transpose \bigr]
\vspace{-0.05in}
\]
where the expectation is taken in steady state, with $z(t) = \sum_{k=0}^\infty \lambda^k \psi(X(t-k))$.   This was estimated using
\begin{equation*}
A_{t+1} = A_t +   \gamma_{t+1}\bigl[ -A_t +  z(t) (\psi(X(t+1)) -  \psi(X(t)) )^\transpose \bigr]
\label{e:TD05}
\end{equation*}
\vspace{-0.05in}
and   $K_{t+1} =-A_{t+1}^{-1}$ (i.e.,   Stochastic Newton Raphson).
\spm{SNR ref needed.  Also...The inverse can of course be computed recursively using the Matrix Inverstion Lemma.}

\begin{figure}[h]
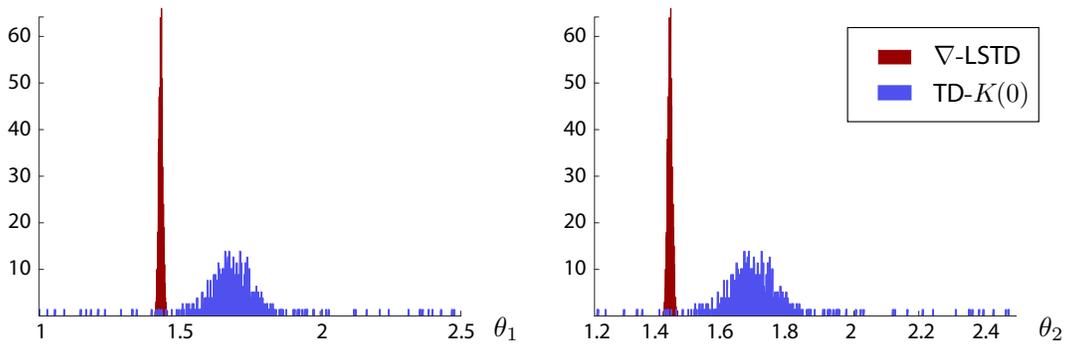
  
\Ebox{.85}{Histogram_EpsyPoint5_Exp_ThetaAll_Shrunk.pdf}
 	\caption{Histogram of the parameters estimated  using  LSTD and $\grad$-LSTD, for the
	speed scaling model}
	\label{fig:TDvsGTD_Expo1_EpsyPoint5}
\vspace{-.4em}
 \end{figure}
 
\Figure{fig:TDvsGTD_Expo1_EpsyPoint5} shows the histogram of the two parameters, estimated using both $\grad$-LSTD-learning and TD-$K$($0$)-learning, run for a duration of $T=10^5$ time steps. The stationary policy used is as defined in \eqref{e:EpsilonPolicy} with $\epsy = 0.5$.
 Note that here again, the variance reduction obtained using the $\grad$-LSTD-learning algorithm is remarkable. 

\spm{perhaps explain outliers in journal version:
In fact, it was observed that the TD-$K$($0$) algorithm had extremely large variance. Experiments were also run for $\epsy = 0.2$ and $\epsy = 0.8$, and the variance of the final estimates  were more or less the same.
}

\notes{NEED TO CHECK THE MATRIX EQUATIONS!! $b$ or $b^\transpose$?? All that!!}

\subsubsection*{$\grad$-LSTD and regenerative LSTD}

In \cite{chehuakulunnzhumehmeywie09}, the authors consider a discrete state space, with  $N(t)$ geometrically distributed on an integer lattice $\{0, \DeltaA, 2\DeltaA,\dots\}$.
\notes{removed {scaled} non-negative integers}
 In this case, the  theory developed for the $\grad$-LSTD-algorithm does not fit the model since we have no convenient representation of a sensitivity process. 
Nevertheless, the algorithm can be run by replacing gradients with ratios of  differences.  In particular, in implementing the algorithm we substitute the definition \eqref{e:diffASS} with
$\dffA(t) = 1 -   [ \fee(X(t) + \DeltaA) - \fee(X(t))  ]/ \DeltaA$,
and $\grad c$ was approximated similarly.


\notes{DOUBT!! In the experiment we consider, we DO have an expression for grad c and grad u! But in general, we may not (discrete policy). So what do we say here? That we use the differences formula or the actual derrivative?? That's what I have done now. Let me know it has to be changed.
{Don't worry -- I like it!  There is motivation.  Suppose the state space really was discrete, and we didn't have analytic formula for $\fee$}
}
 
The geometric distribution gives $\Prob (N(t)=n\DeltaA) = (1-p_A)^{n} p_A$;
the values $p_A = 0.04$ and $\DeltaA = 1/24$ were chosen, so that $\Expect[N(t)] = 1$.

\spm{can we skip this?
The basis was $\psi(x) = (x^{3/2} ,x)^\transpose$, as previously.
 \\
 I also skipped the definition of regeneration for CDC}


The sequence of steps followed in the regenerative LSTD-learning algorithm are similar to  \eqref{e:TD2}:
\begin{equation}
\begin{aligned}
\varphi(t) & = \one\{X(t-1) \neq 0\} \varphi(t-1) +  \tilpsi(t)
\\
b(t) & = (1-\gamma_t) b(t-1) + \gamma_t \tilc (t)  \varphi(t)
\\
M(t) & = (1-\gamma_t) M(t-1) + \gamma_t \tilpsi(t) \tilpsi^\transpose(t)
\\
\barc(t) & = (1-\gamma_t) \barc(t-1) +  \gamma_t c(X(t)),
\label{e:RTD}
\end{aligned}
\end{equation}
where $\tilc(t) = c(X(t)) - \barc(t)$, 
$\tilpsi(t) \eqdef \psi(X(t)) - \eta_\psi(t)$,  
and
\[
\eta_\psi(t) = (1-\gamma_t) \eta_\psi(t-1) + \gamma_t \psi(X(t)) \, .
\]
The parameter at time $t$ is obtained as $\theta(t) = M^{-1}(t) b(t)$.
\spm{deleted for CDC: Note that the eligibility vector $\varphi(t)$ is reset to $0$ every time the queue empties.}

We replace $\psi$ with $\tilpsi$ in \eqref{e:RTD} to restrict the growth rate of the eligibility vector $\varphi(t)$, which in turn reduces the variance of the estimates $\theta(t)$. This is justified because $h^\theta_a = \theta^\transpose \tilpsi$ differs from $h^\theta_b = \theta^\transpose \psi$ by only a constant term.

\spm{removed to save space: , and the relative value function is unique only up to additive constants.
}

\spm{to avoid all this repetition you need to put all assumptions at the top:
 The stationary policy was once again as defined in \eqref{e:EpsilonPolicy}, with $\epsy = 0.5$. 
 }

 
 \begin{figure}[h]
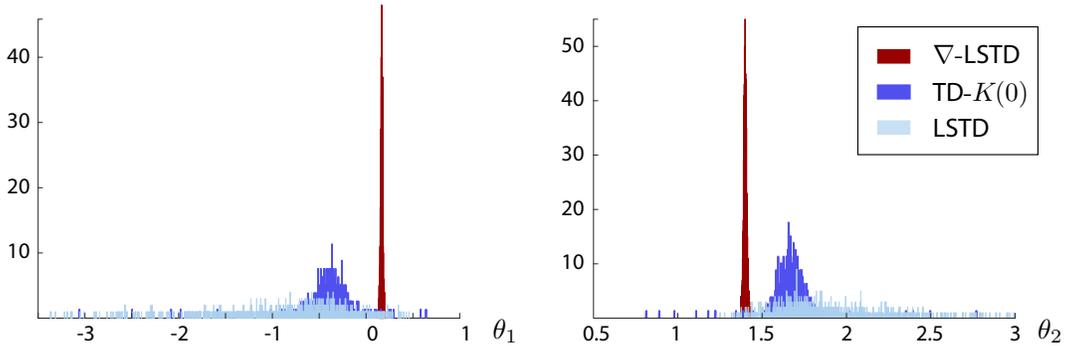

\Ebox{.85}{Histogram_EpsyPoint5_Geo_ThetaAll_Shrunk-LR.pdf}
		\caption{Histogram of the parameters estimated  using the three algorithms.
		}
	\label{fig:VarAnaEpsyPt5T10e5}
 \end{figure}

 For comparison purposes, we also implemented the TD-$K$($0$) algorithm, defined in \eqref{e:TD0}. 
 \Figure{fig:VarAnaEpsyPt5T10e5} shows the histogram of $\theta(T)$ obtained using $\grad$-LSTD-learning, regenerative LSTD-learning, and TD-$K$($0$) learning algorithms, with $T=10^5$. 
 The variance using $\grad$-LSTD-learning algorithm is extremely small compared to the other two.  Also, TD-$K$($0$)   had the largest outliers. 
 \spm{ok?}

It is also noticeable in \Figure{fig:VarAnaEpsyPt5T10e5} that there is a difference in the values to which the   algorithms have converged. To investigate the quality of the estimates through a different lens,  the  \emph{Bellman error} was computed for each algorithm: 
\vspace{-0.05in}
\[
	\clE_B(x) = [P - I]h(x) + \tilc(x),
	\vspace{-0.05in}
\]
where $P$ of course depends on the policy $\fee$, and $h = {\bar{\theta}}^\transpose \psi$, where ${\bar{\theta}}$ is the mean of the $10^3$ parameter estimates obtained for each of the different algorithms.


 \spm{Revised, since this statement suggests that the answer is not unique:    is because there are two degrees of freedom for both these algorithms (two parameters to be estimated), and it is the combination of the two that matters. }

\Figure{TilPsiAndTilC_EpsyPoint5} shows plots of the   Bellman error observed, for each of the three algorithms, for typical values of $\theta(T)$, with 
  $T=10^3$, $10^4$ and $10^5$.  In each case the stationary policy \eqref{e:EpsilonPolicy}  was used, with $\epsy = 0.5$.

The limit of the Bellman error is the same in each experiment.  In the case of the $\grad$-LSTD algorithm,  the Bellman error is unchanged for $T\ge 10^3$.   Achieving similar performance using either of the other algorithms required about $10^5$ samples.   

\begin{figure}[h]
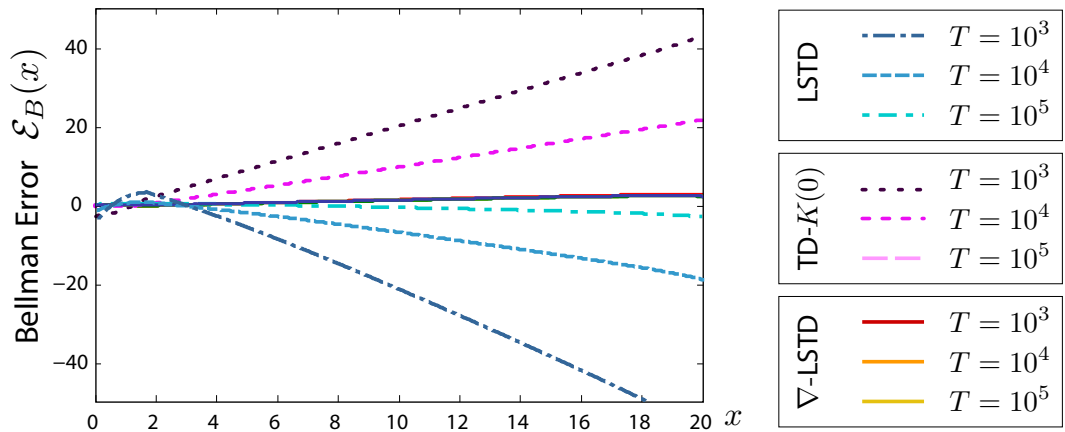

\Ebox{.85}{EpsyPoint5GeoBError_Full_exposed.pdf}
	\caption{Bellman error corresponding to the estimates of $h$ based on the three algorithms.   
	The convergence time for $\grad$-LSTD algorithm is two orders of magnitude faster than the two other algorithms.    }
		\label{TilPsiAndTilC_EpsyPoint5}
\end{figure}


\section{Conclusions}
\label{sec:conclusions}

The new gradient based TD-learning algorithms for value function approximation introduced here show  remarkable variance reduction in the examples considered.  This is explained by the reduction in magnitude of the functions used as inputs to the algorithm, and also from the additional ``discounting'' that is inherent in the new algorithms.

The most interesting open problem is why the algorithm is so effective even in a discrete state-space setting in which there is no theory to justify its application.

 \notes{Some comments about the advantages of this new approach: Something on sample path representation, variance reduction, and stuff. 
Something about estimating the higher order derivatives of the solutions to Poisson's equations? We give a formula and tell that this can be used?
Also, how we can integrate and get back the solution to Poisson's equation, if required.
\\
{None of the following has been theoritically verified; Also needs to be put better}
 TD-learning for estimating higher order gradients of the solution to Poisson's equation can be obtained, which reduces the variance and increases the convergence rate. Of course, higher the order of the derivatives, the error in estimating the actual original solution could be higher, but higher the order of the original Poisson's equation, less affected is the TD-learning algorithm for estimating higher order derivatives.
 }
 
  \notes{In the formula for higher order derivatives, the lower derivatives appears! Need to think about this! Maybe there is no use of estimating higher order derivatives when we need to estimate lower order derivatives and use it to obtain the higher order derivatives!}
 
%
\bibliographystyle{IEEEtran}
\bibliography{strings,markov,markovExtras,q}

\def\cprime{$'$}\def\cprime{$'$}
\begin{thebibliography}{10}
\providecommand{\url}[1]{#1}
\csname url@samestyle\endcsname
\providecommand{\newblock}{\relax}
\providecommand{\bibinfo}[2]{#2}
\providecommand{\BIBentrySTDinterwordspacing}{\spaceskip=0pt\relax}
\providecommand{\BIBentryALTinterwordstretchfactor}{4}
\providecommand{\BIBentryALTinterwordspacing}{\spaceskip=\fontdimen2\font plus
\BIBentryALTinterwordstretchfactor\fontdimen3\font minus
  \fontdimen4\font\relax}
\providecommand{\BIBforeignlanguage}[2]{{%
\expandafter\ifx\csname l@#1\endcsname\relax
\typeout{** WARNING: IEEEtran.bst: No hyphenation pattern has been}%
\typeout{** loaded for the language `#1'. Using the pattern for}%
\typeout{** the default language instead.}%
\else
\language=\csname l@#1\endcsname
\fi
#2}}
\providecommand{\BIBdecl}{\relax}
\BIBdecl

\bibitem{CTCN}
S.~P. Meyn, \emph{Control Techniques for Complex Networks}.\hskip 1em plus
  0.5em minus 0.4em\relax Cambridge University Press, 2007, pre-publication
  edition available online.

\bibitem{bershr96a}
D.~Bertsekas and S.~Shreve, \emph{Stochastic Optimal Control: The Discrete-Time
  Case}.\hskip 1em plus 0.5em minus 0.4em\relax Athena Scientific, 1996.

\bibitem{asmgly07}
S.~Asmussen and P.~W. Glynn, \emph{Stochastic Simulation: Algorithms and
  Analysis}, ser. Stochastic Modelling and Applied Probability.\hskip 1em plus
  0.5em minus 0.4em\relax New York: Springer-Verlag, 2007, vol.~57.

\bibitem{yanmehmey13}
T.~Yang, P.~Mehta, and S.~Meyn, ``Feedback particle filter,'' \emph{IEEE Trans.
  Automat. Control}, vol.~58, no.~10, pp. 2465--2480, Oct 2013.

\bibitem{sutbar98}
R.~Sutton and A.~Barto, \emph{Reinforcement Learning: An Introduction}.\hskip
  1em plus 0.5em minus 0.4em\relax Cambridge, MA: {MIT Press}. {On-line edition
  at \url{http://www.cs.ualberta.ca/~sutton/book/the-book.html}}, 1998.

\bibitem{bertsi96a}
D.~Bertsekas and J.~N. Tsitsiklis, \emph{Neuro-Dynamic Programming}.\hskip 1em
  plus 0.5em minus 0.4em\relax Cambridge, Mass: Atena Scientific, 1996.

\bibitem{bor08a}
V.~S. Borkar, \emph{Stochastic Approximation: A Dynamical Systems
  Viewpoint}.\hskip 1em plus 0.5em minus 0.4em\relax {Delhi, India and
  Cambridge, UK}: {Hindustan Book Agency and Cambridge University Press
  (jointly)}, 2008.

\bibitem{huachemehmeysur11}
D.~Huang, W.~Chen, P.~Mehta, S.~Meyn, and A.~Surana, ``Feature selection for
  neuro-dynamic programming,'' in \emph{Reinforcement Learning and Approximate
  Dynamic Programming for Feedback Control}, F.~Lewis, Ed.\hskip 1em plus 0.5em
  minus 0.4em\relax Wiley, 2011.

\bibitem{MT}
S.~P. Meyn and R.~L. Tweedie, \emph{Markov chains and stochastic stability},
  2nd~ed.\hskip 1em plus 0.5em minus 0.4em\relax Cambridge: Cambridge
  University Press, 2009, published in the Cambridge Mathematical Library. 1993
  edition online.

\bibitem{devkonmey16a}
A.~Devraj, I.~Kontoyiannis, and S.~P. Meyn, ``Exponential ergodicity and
  {Lyapunov} exponents {Part I:} {Markov} chains in discrete time,'' In
  preparation, March 2016.

\bibitem{meytwe93e}
S.~P. Meyn and R.~L. Tweedie, ``{Generalized resolvents and {Harris} recurrence
  of {Markov} processes},'' \emph{Contemporary Mathematics}, vol. 149, pp.
  227--250, 1993.

\bibitem{nev72}
J.~Neveu, ``Potentiel {Markovien} r\'{e}current des cha\^{\i}nes de {Harris},''
  \emph{Ann. Inst. Fourier, Grenoble}, vol.~22, pp. 7--130, 1972.

\bibitem{konmey03a}
I.~Kontoyiannis and S.~P. Meyn, ``Spectral theory and limit theorems for
  geometrically ergodic {Markov} processes,'' \emph{Ann. Appl. Probab.},
  vol.~13, pp. 304--362, 2003, presented at the { INFORMS Applied Probability
  Conference, NYC, July, 2001}.

\bibitem{devkonmey16b}
A.~Devraj, I.~Kontoyiannis, and S.~P. Meyn, ``Exponential ergodicity and
  {Lyapunov} exponents {Part II:} {Markovian} diffusions,'' In preparation.,
  March 2016.

\bibitem{boy02}
J.~A. Boyan, ``Technical update: Least-squares temporal difference learning,''
  \emph{Mach. Learn.}, vol.~49, no. 2-3, pp. 233--246, 2002.

\bibitem{chehuakulunnzhumehmeywie09}
W.~Chen, D.~Huang, A.~A. Kulkarni, J.~Unnikrishnan, Q.~Zhu, P.~Mehta, S.~Meyn,
  and A.~Wierman, ``Approximate dynamic programming using fluid and diffusion
  approximations with applications to power management,'' in \emph{Proc. of the
  48th IEEE Conf. on Dec. and Control; held jointly with the 2009 28th Chinese
  Control Conference}, 2009, pp. 3575--3580.

\end{thebibliography}


\end{document}